\newtheorem{theorem}{Theorem}[section]
\newtheorem{proposition}[theorem]{Proposition}
\begin{document}
\title{Interlaced particle systems and tilings of the Aztec diamond}
\author{Benjamin J. Fleming and Peter J. Forrester}
\date{}
\maketitle

\noindent
\thanks{\small
Department of Mathematics and Statistics,
The University of Melbourne,
Victoria 3010, Australia \\
email: {B.Fleming2@pgrad.unimelb.edu.au; P.Forrester@ms.unimelb.edu.au}

\begin{abstract}
\noindent
Motivated by the problem of domino tilings of the Aztec diamond, a weighted particle system is defined on $N$ lines, with line $j$ containing $j$ particles. The particles are restricted to lattice points from 0 to $N$, and particles on successive lines are subject to an interlacing constraint. It is shown that marginal distributions for this particle system can be computed exactly. This in turn is used to give
unified derivations of a number of fundamental properties of the tiling problem, for example the evaluation of the number of distinct configurations and the relation to the GUE minor process. An interlaced particle system associated with the domino tiling of a certain half Aztec diamond
is similarly defined and analyzed.
\end{abstract}

%THESIS
\section{Introduction}

The analysis of certain tiling models
is of common interest to both combinatorics and statistical mechanics.  As an explicit example, consider a so-called $(a,b,c)$ hexagon.  This is a hexagon with integer side lengths $a,b,c,a,b,c$ --- side lengths $a$ vertical by convention --- reading anti-clockwise, and all internal angles $\frac{2\pi}{3}$ (see Figure \ref{Hexagon} for an example).  Such a hexagon can be tiled using three species of rhombi, each with side lengths $1$ and angles $\frac{\pi}{3}$, $\frac{2\pi}{3}$.  The three species of rhombi are distinguished by their orientation -- down sloping, up sloping or neutral in slope, reading left to right.  As illustrated in Figure \ref{Hexagon}, it is immediately clear that a particular tiling of the hexagon can be uniquely specified by a family of non-intersecting lattice paths.  These all start and finish one unit apart, and move up or down half a unit at each step (reading left to right).

\begin{figure}[p]\label{Hexagon}
\includegraphics[scale=0.6]{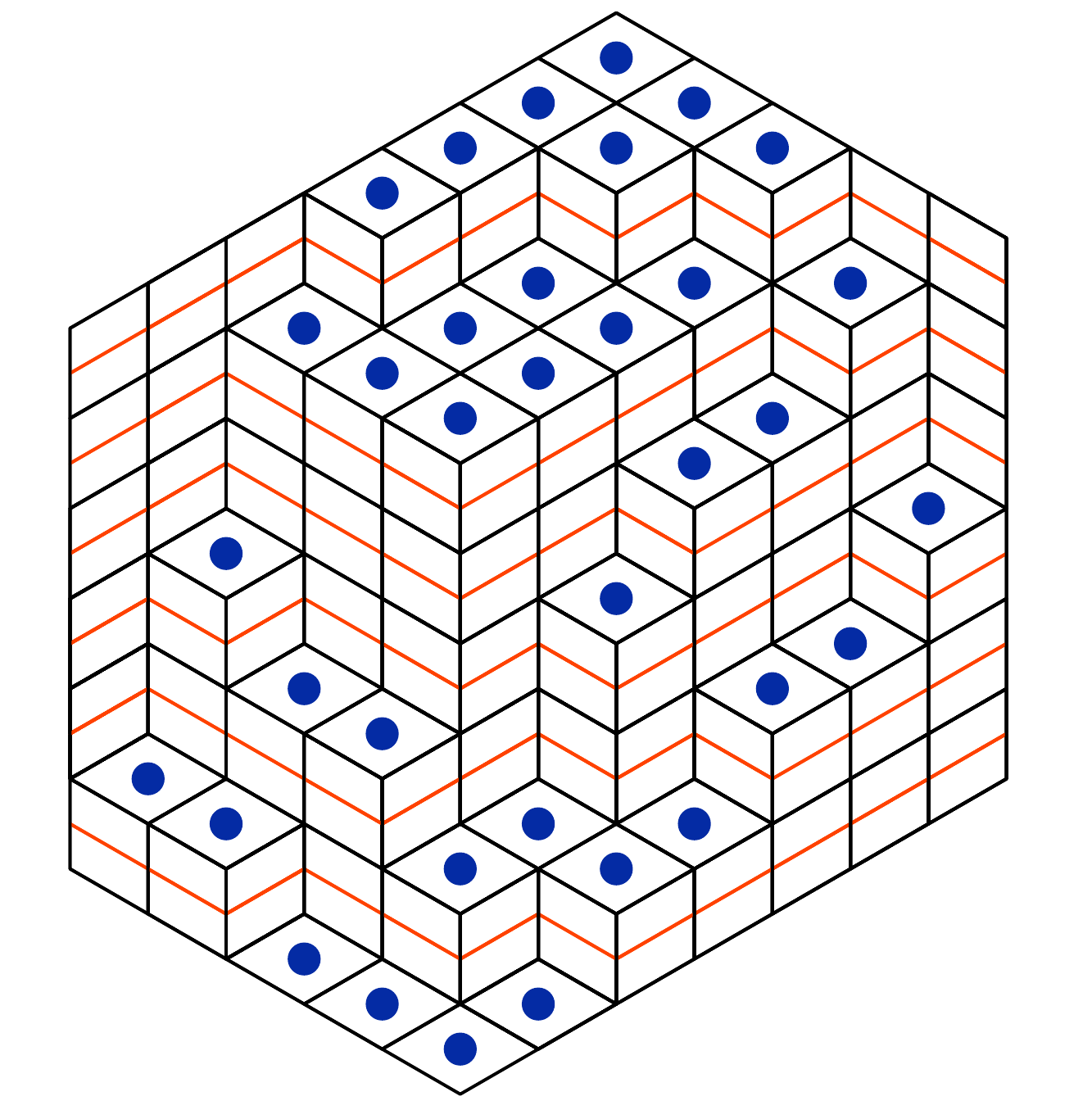}
\caption{(Colour online) A (6,5,7) hexagon, showing the family of non-intersecting lattice paths
(red lines), as well as the particles in the centres of the horizontal rhombi.}
\end{figure}

\begin{figure}[p]\label{Aztec Paths}
\includegraphics[scale=0.65]{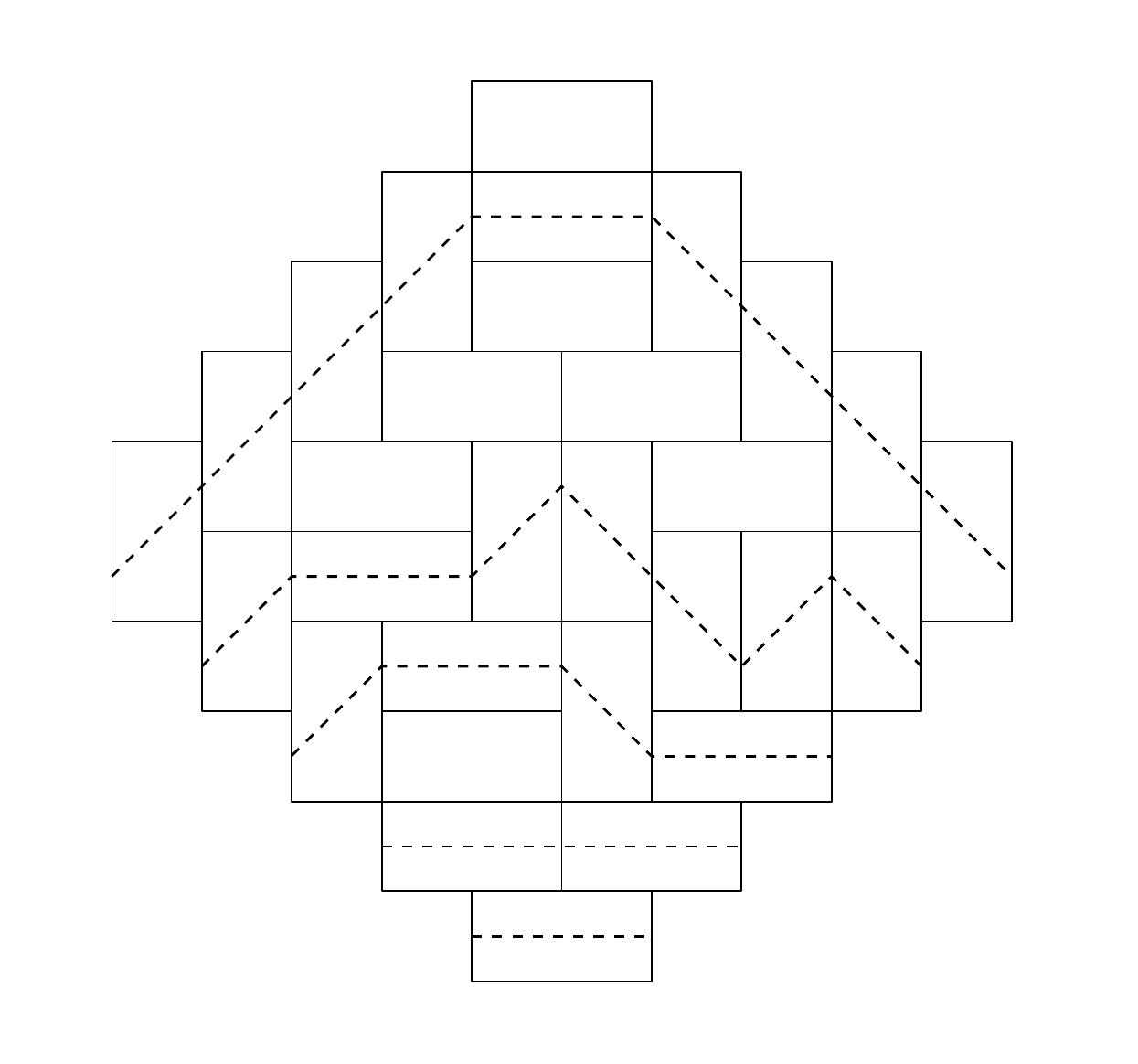}
\caption{The 5 non-intersecting paths corresponding to a particular domino tiling
of the Aztec diamond of order 5.}
\end{figure}

In this article our interest is in tilings of the Aztec diamond by $2\times1$ dominoes. The Aztec diamond of order $N$ is the union of all lattice squares within the diamond shaped region $\{ (x,y): |x| +|y| \leq N+1\}$, and the dominoes may cover the lattice squares by being placed horizontally or vertically.  As with the hexagon tiling of the previous paragraph, such a tiling of the Aztec diamond can be uniquely specified by a family of lattice paths.  To see this, with the top left lattice square specified as white, introduce a checkerboard colouring of all the lattice squares making up the Aztec diamond.  For a horizontal domino which covers a white-black (black-white) pair of squares when reading left to right, no segment (a horizontal segment) of path is marked.  For a vertical domino which covers a white-black (black-white) pair of squares when reading top to bottom, a right-up (right-down) segment of path is marked.  This results in a family of $N$ non-intersecting lattice paths, with segments up sloping, down sloping or horizontal, starting at equally spaced points on the bottom down sloping edge, and finishing at the corresponding points on the bottom up sloping edge.  See Figure 2 for an example.

In the case of the tiling of the hexagon, Figure \ref{Hexagon} makes it clear that complementary to the non-intersecting paths, the tiling configuration can equally as well be specified by recording only the centres of the neutral in slope rhombi.  These centres can in turn be regarded as a particle system on the set of vertical lines naturally associated with the $(a,b,c)$ hexagon.  This particle system has the peculiar property that the number of particles equals the number of lines for lines $1,2,\dots,b$, then stays constant for lines $b+1,\dots,c$, and then equals $b-1,b-2,\dots,1$ for lines $c+1, \dots, c+b-1$.  Furthermore, the particles are constrained by interlacing constraints, the details of which are evident by inspection of Figure \ref{Hexagon}.

\begin{figure}[t]\label{Aztec Shading}
\includegraphics[scale=0.7]{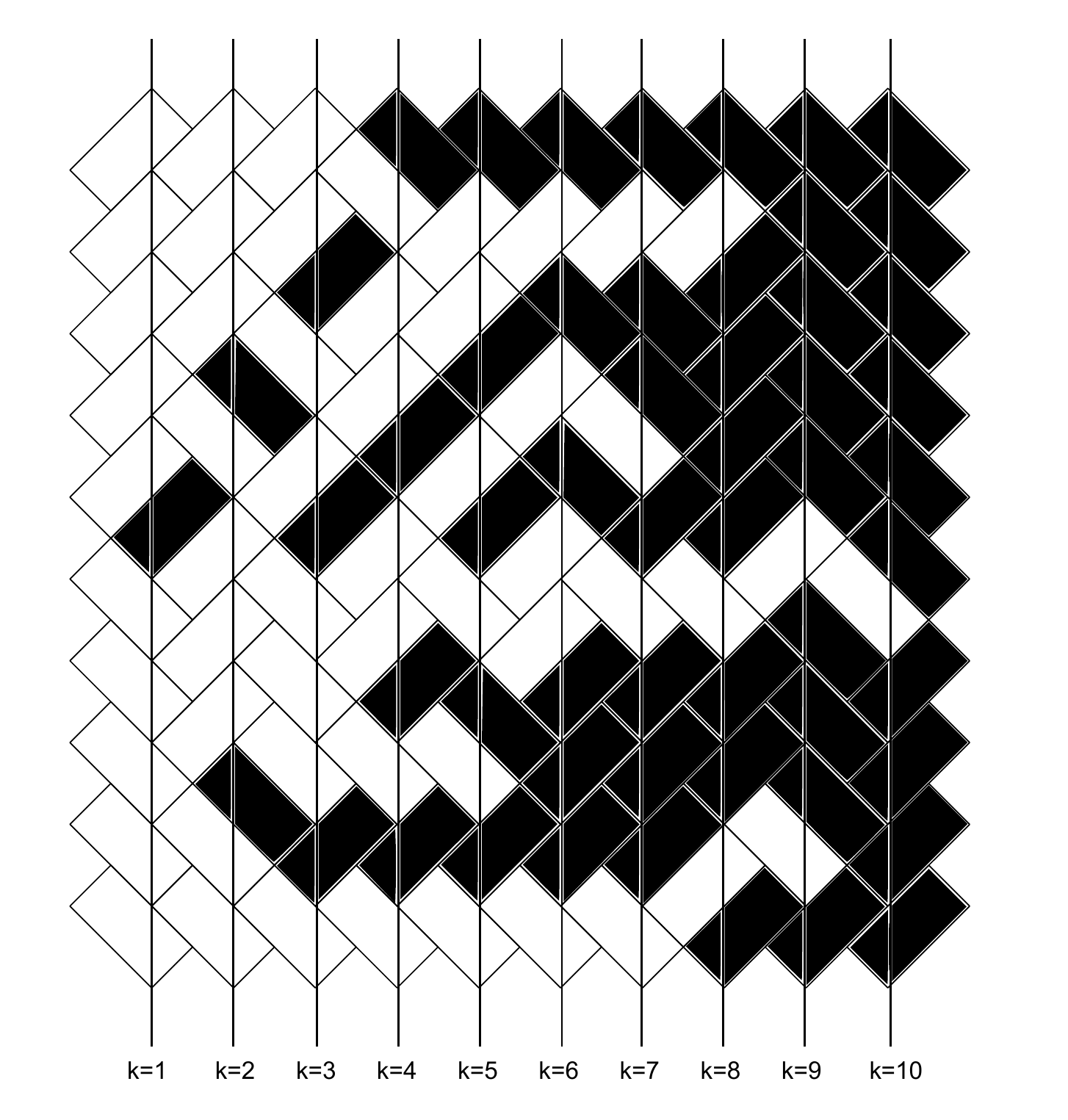}
\caption{An example of the shading of an Aztec diamond of order 10, rotated $45^o$.  Here, the lines pass through the black squares of the checkerboard colouring, and we can see that all the E and S type dominoes (the shaded dominoes) are intersected by lines in their left half, providing an easy way to check the shading. }
\end{figure}

Our interest is in the interlaced particle system implied by a domino tiling of the Aztec diamond.  For its specification, with the Aztec diamond checkerboard coloured as already described, let the horizontal dominoes such that the left square is colour black (white) be called of E (W) type.  Similarly, let the vertical dominoes such that the top square covered is black (white) be called of S (N) type \cite{EKLP92}. Suppose now that the E and S type dominoes are shaded and numbered lines added (see Figure 3).  Each line $k$ passes through the interior of $k$ shaded tiles, and these intersections are considered as specifying the positions of $k$ particles \cite{Jo05a,JN06}. In an appropriate co-ordinate system, these particles occupy distinct positions $x_1^{(k)} > \dots > x_k^{(k)}$ restricted to the lattice points $0,1,2,\dots, N$ on line $k$ ($k=1,\dots, N$).  Most importantly, the particles must satisfy the interlacing condition
\begin{eqnarray}\label{interlace}
x_{i+1}^{(k)} \leq x_i^{(k-1)} \leq x_i^{(k)} & {\rm for} & i=1,\dots, k-1
\end{eqnarray}

A crucial point in relation to our study is the inverse of this mapping. Consider co-ordinate $x_i^{(k-1)}$ on line $k-1$.  Suppose furthermore that the interlacing condition \eqref{interlace} holds with strict inequalities.  Then there are precisely two domino orientations corresponding to $x_i^{(k-1)}$.  On the other hand, if either inequality in \eqref{interlace} is an equality, there is just a single possible domino orientation corresponding to $x_i^{(k-1)}$ (see Figure 4).  Importantly, this means that unlike with the hexagon, given a random tiling of the Aztec diamond with every possibly tiling equally likely, the corresponding particle system must be weighted.

We remark that beyond the theory of tilings of the hexagon and the Aztec diamond, interlaced particle systems with varying numbers of particles occur naturally as the eigenvalues of successive minors of Hermitian random matrix ensembles
\cite{Ba01a,De08a,FN08a}. In fact we will show that in a certain scaling limit the particle system for the Aztec diamond converges in distribution to the eigenvalue process for the minors of Gaussian complex Hermitian random matrices (minors of the GUE ensemble). Also, although not a theme addressed here,
we remark that interlaced particle systems are a rich source of determinantal point processes
\cite{Ra00,FR02,Sa05,JN06,BFPS06, FN08, BP07,BF08,MOW09,No09}.

\begin{figure}[t]\label{Squares}
\includegraphics[scale=0.7]{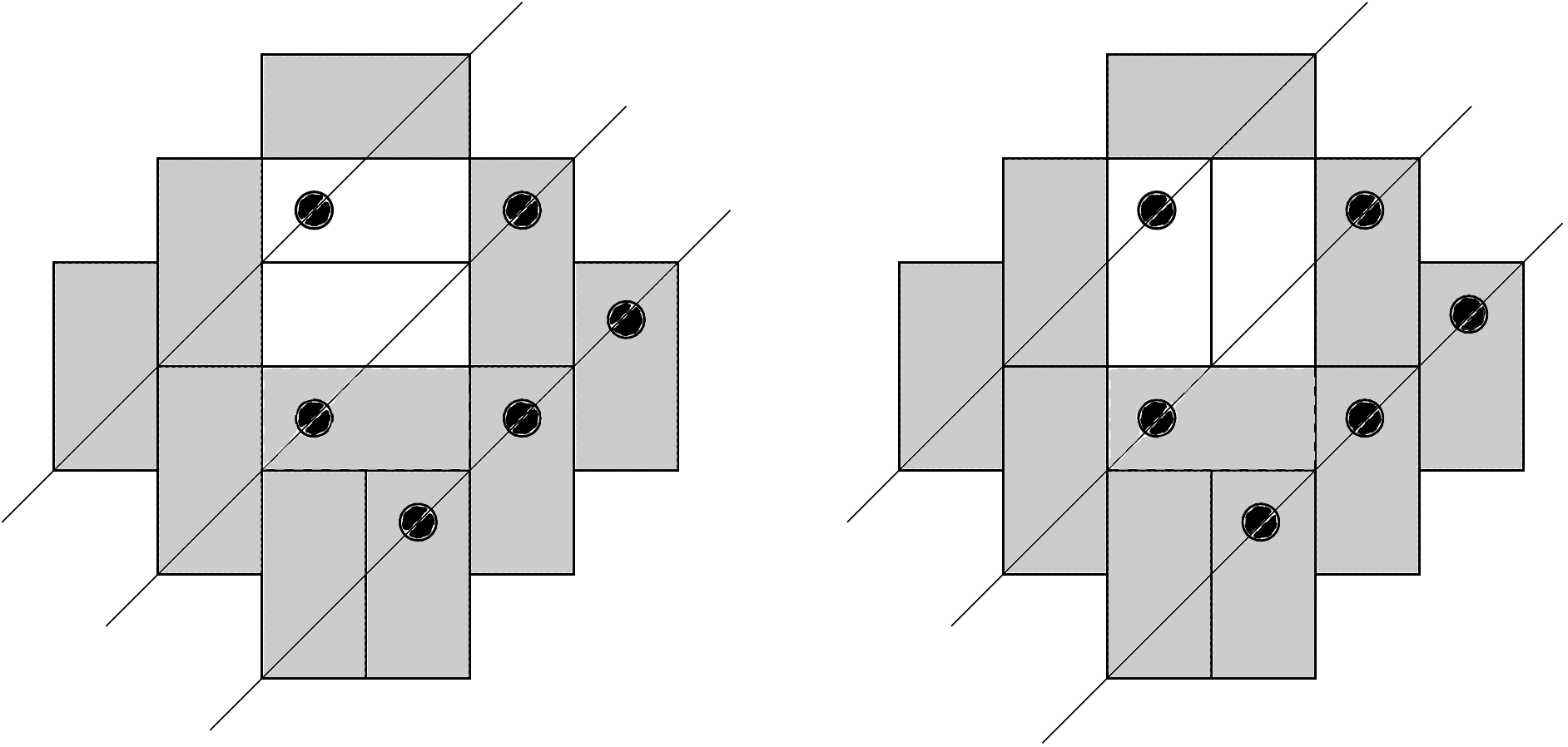}
\caption{An example of two different tilings with the same particle picture. As can be seen, where a particle is not adjacent to another particle on the next line, a `square' is formed, that can be tiled in two different ways.}
\end{figure} 

In this paper, we use the  underlying particle system to rederive fundamental results about random domino tilings of Aztec diamonds, including the number of possible tilings, the multi and single line probability density functions (PDFs) for the positions of shaded particles, the limiting large $N$ shape of the disordered region (Arctic circle effect), and the relation to the GUE minor process from random matrix theory.
We consider a different particle system corresponding to the domino tiling of a half Aztec diamond, and exhibit analogous properties, in particular the limiting large $N$ shape of the disordered region (now half the Arctic circle) and the relation to the anti-symmetric GUE minor process.

\section{One and multi-line PDFs}
Consider a sequence of vertical lines in ${\mathbb R}^2$, with the $k$-th line at $x=k$ and containing $k$ particles.  Let $p(x^{(m)}, \dots, x^{(n)})$ be defined as the joint probability that the $i$-th largest particle on line $k$ is at $(k,x_i^{(k)})$ for $k=m,\dots,n$ and $i=1,\dots, k$.  For the particle system relating to a random tiling of an Aztec diamond of order $N$, we know from the discussion about
(\ref{interlace}) that $x_i^{(k)}$ must obey the restrictions
\begin{align}
&0 \leq x_{i+1}^{(k)} < x_{i}^{(k)} \leq N, \nonumber \\ 
&\label{restrict} x_{i+1}^{(k+1)} \leq x_i^{(k)} \leq x_i^{(k+1)}, \\
&x_i^{(k)} \in {\mathbb Z}, \nonumber
\end{align}
(the second of these is just (\ref{interlace}) with $k \mapsto k+1$).
We also know that although each tiling is equally likely, each particle system is not, since a particle system is not uniquely defined by a tiling.  To account for this, we introduce the notion of adjacency.  Let $x_i^{(k)}$ be called {\em adjacent} if for some $j$, $x_i^{(k)} = x_j^{(k+1)}$.  Each particle that is not adjacent (and is not on the last line) represents a tile with two possible orientations, and so
must be weighted by 2. Given that there are ${1 \over 2} N(N-1)$ particles not on the last line, the 
joint PDF for the entire particle system is given by
\begin{equation}\label{3}
	 p(x^{(1)}, \dots, x^{(N)}) = \frac{2^{N(N-1)/2-\alpha(x^{(1)}, \dots, x^{(N-1))})}}{A_N}\chi(x^{(1)}, \dots, x^{(N)})
\end{equation}
where $\alpha$, the number of adjacent particles, is given by
\begin{equation} \alpha(x^{(m)}, \dots, x^{(n)}) = \sum_{k=m}^n \alpha(x^{(k)}) = \sum_{\substack{i = 1,\dots, k \\ k=m,\dots, n}} \delta_{x^{(k)}_i,x^{(k+1)}_i} + \delta_{x^{(k)}_i,x^{(k+1)}_{i+1}}
\end{equation}
$A_N$ is the number of possible tilings of an Aztec diamond of order $N$, and 
\begin{equation}\label{chi}
\chi(x^{(m)}, \dots, x^{(n)})= \left\{ \begin{array}{ll} 1, & {\rm if} \: \: (x^{(m)}, \dots, x^{(n)}) \;\; {\rm obey } \;\; \eqref{restrict} \\ 0, & {\rm otherwise} \end{array} \right.
\end{equation}
\begin{proposition}\label{pdfprop}
For the particle system corresponding to uniform random tilings of the Aztec diamond of order $N$
\begin{equation}\label{pdfx}
p(x^{(m)}, \dots, x^{(N)}) = \frac{\Delta (x^{(m)})}{D_{m,N} 2^ {\alpha(x^{(m)}, \dots, x^{(N-1)})}} \chi(x^{(m)}, \dots, x^{(N)})
\end{equation}
where 
\begin{equation}
\Delta (x^{(n)}) = \prod_{1\leq i<j \leq n} (x^{(n)}_i-x^{(n)}_j)
\end{equation}
and
\begin{equation}\label{7e}
D_{m,N} = A_N 2^{-N(N-1)/2}\prod_{i=1}^{m-1} i!
\end{equation}
\end{proposition}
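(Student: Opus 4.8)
The plan is to prove \eqref{pdfx} by induction on $m$, summing the joint PDF over one line at a time and working upward from line $1$. The base case $m=1$ requires no computation: $\Delta(x^{(1)})$ is an empty product equal to $1$, and \eqref{7e} gives $D_{1,N}=A_N 2^{-N(N-1)/2}$, so the right-hand side of \eqref{pdfx} reduces verbatim to the defining expression \eqref{3}. For the step from $m$ to $m+1$ I assume \eqref{pdfx} for a given $m$ and produce the formula for $m+1$ from the marginal relation $p(x^{(m+1)},\dots,x^{(N)})=\sum_{x^{(m)}}p(x^{(m)},\dots,x^{(N)})$, the sum being over all admissible configurations on line $m$.

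The first task is to isolate the line-$m$ dependence. In the exponent $\alpha(x^{(m)},\dots,x^{(N-1)})=\sum_{k=m}^{N-1}\alpha(x^{(k)})$ only the term $\alpha(x^{(m)})$ involves line $m$, since each $\alpha(x^{(k)})$ couples lines $k$ and $k+1$ and the running index starts at $k=m$; similarly $\chi(x^{(m)},\dots,x^{(N)})$ splits into the interlacing of lines $m$ and $m+1$ and the constraint $\chi(x^{(m+1)},\dots,x^{(N)})$ on the remaining lines. Pulling the $m$-independent factors outside the sum and using $D_{m,N}/D_{m+1,N}=1/m!$ from \eqref{7e}, the step collapses to the single-line identity
\begin{equation}\label{key}
\sum_{x^{(m)}} \frac{\Delta(x^{(m)})}{2^{\alpha(x^{(m)})}} = \frac{1}{m!}\,\Delta(x^{(m+1)}),
\end{equation}
where the sum ranges over integer points obeying $x^{(m+1)}_{i+1}\le x^{(m)}_i\le x^{(m+1)}_i$ for $i=1,\dots,m$. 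Terms with $x^{(m)}_i=x^{(m)}_{i+1}$ contribute nothing because $\Delta(x^{(m)})$ vanishes there, so the weak interlacing bounds may be used freely and the strict inequalities in \eqref{restrict} cost nothing.

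To prove \eqref{key} I expand $\Delta(x^{(m)})=\det[(x^{(m)}_i)^{m-j}]_{i,j=1}^m$ and use multilinearity in the rows, which lets the sum act on one row at a time. The weight $2^{-\alpha(x^{(m)})}$ supplies a factor $\tfrac12$ exactly when $x^{(m)}_i$ lands on an endpoint of its interval $[x^{(m+1)}_{i+1},x^{(m+1)}_i]$, so each row sum is a half-weight-at-endpoints (trapezoidal) sum. Setting $\phi_j(b)=\tfrac12\,0^{\,m-j}+\sum_{z=1}^{b-1}z^{m-j}+\tfrac12\,b^{\,m-j}$, a telescoping check gives $\sum_{z=a}^{b}{}' z^{m-j}=\phi_j(b)-\phi_j(a)$, whence the $(i,j)$ row sum equals $\phi_j(x^{(m+1)}_i)-\phi_j(x^{(m+1)}_{i+1})$ and the left side of \eqref{key} becomes $\det[\phi_j(x^{(m+1)}_i)-\phi_j(x^{(m+1)}_{i+1})]_{i,j=1}^m$. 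By Faulhaber's formula $\phi_j$ is a polynomial of degree $m-j+1$ with leading coefficient $1/(m-j+1)$.

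The final step recognises this $m\times m$ difference determinant as a signed minor of an $(m+1)\times(m+1)$ determinant: bordering with the constant function $1$ (of degree $0$, the single degree absent from $\{\phi_1,\dots,\phi_m\}$) and undoing the row differences by elementary row operations reduces it to $\det[\psi_0(x^{(m+1)}_i),\dots,\psi_m(x^{(m+1)}_i)]_{i=1}^{m+1}$, with $\psi_0\equiv 1$ and $\psi_k=\phi_{m-k+1}$ of degree $k$. Since $\{\psi_0,\dots,\psi_m\}$ is a triangular basis for polynomials of degree $\le m$, this determinant equals the product of leading coefficients $1\cdot 1\cdot\tfrac12\cdots\tfrac1m=1/m!$ times the Vandermonde $\Delta(x^{(m+1)})$, up to a sign; since both sides of \eqref{key} are positive for the ordered configurations involved, the sign must be $+1$, which gives \eqref{key} and closes the induction. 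I expect the main obstacle to be this middle computation: correctly reading the endpoint weighting as a trapezoidal sum, verifying the telescoping $\sum_{z=a}^{b}{}' z^{m-j}=\phi_j(b)-\phi_j(a)$, and pinning down the degrees and leading coefficients of the $\phi_j$. The surrounding bookkeeping---factoring $\alpha$ and $\chi$, and tracking the constant $1/m!$---is routine by comparison.
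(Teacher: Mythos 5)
Your proof is correct and follows essentially the same route as the paper: induction on the line index, reduction to a single-line sum, row-by-row summation of the Vandermonde determinant with the trapezoidal endpoint weights supplied by $2^{-\alpha(x^{(m)})}$, and a polynomial-degree argument producing $\Delta(x^{(m+1)})/m!$ together with the recursion $D_{m+1,N}=m!\,D_{m,N}$. The only difference is cosmetic, in how the determinant of row sums is evaluated --- the paper strips lower-order terms by column operations and evaluates $\det\bigl[(b_i^j-a_i^j)/j\bigr]$ by a degree-and-factor argument, while you keep the discrete antiderivatives $\phi_j$ and use a bordering trick with the triangular basis $\{\psi_k\}$, which is the same computation made slightly more explicit (just take $\phi_j$ to mean the Faulhaber polynomial throughout, since the literal sum-defined $\phi_m$ disagrees with it at $0$ and would break the telescoping in the $j=m$ column when $a_i=0$).
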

\begin{proof}
The $m=1$ case is true from \eqref{3}. Assume the $m=n$ case is true. Then
\begin{equation}
p(x^{(n)}, \dots, x^{(N)}) = \frac{\Delta (x^{(n)})}{D_{n,N} 2^ {\alpha(x^{(n)}, \dots, x^{(N-1)})}}\chi(x^{(n)}, \dots, x^{(N)})
\end{equation}
Summing on the $n$-th line gives 
\begin{eqnarray}
p(x^{(n+1)}, \dots, x^{(N)}) & = &\sum_{x_1^{(n)} = x_2^{(n+1)}}^{x_1^{(n+1)}}\dots\sum_{x_n^{(n)} = x_{n+1}^{(n+1)}}^{x_n^{(n+1)}} \frac{\Delta (x^{(n)})\chi(x^{(n+1)}, \dots, x^{(N)})}{D_{n,N} 2^ {\alpha(x^{(n)}, \dots, x^{(N-1)})}} \nonumber \\
& = & \frac{\chi(x^{(n+1)}, \dots, x^{(N)})}{D_{n,N} 2^ {\alpha(x^{(n+1)}, \dots, x^{(N-1)})}} \det\left[\sum_{t=a_i}^{b_i} \frac{t^{j-1}}{2^{\delta_{t,a_i}+\delta_{t,b_i}}}\right]_{i,j = 1,\dots, n} \nonumber \\
\end{eqnarray}
where we have set $a_i = x_{n-i+2}^{(n+1)}$, $b_i= x_{n-i+1}^{(n+1)}$.
The sum in the determinant is a polynomial function of $a_i$ and $b_i$ with highest degree
term $(b_i^j - a_i^j)/j$. 
Since the lower degree terms will have the same dependence on $a_i, b_i$ for each row $i$, they can be cancelled out by column operations.  Thus
\begin{eqnarray}
p(x^{(n+1)}, \dots, x^{(N)}) & = & \frac{\chi(x^{(n+1)}, \dots, x^{(N)})}{D_{n,N} 2^ {\alpha(x^{(n+1)}, \dots, x^{(N-1)})}} \det\left[\frac{b_i^j-a_i^j}{j}\right]_{i,j={1,\dots, n}} \nonumber \\
& = &  \frac{\Delta (x^{(n+1)})}{n!D_{n,N} 2^ {\alpha(x^{(n+1)}, \dots, x^{(N-1)})}}\chi(x^{(n+1)}, \dots, x^{(N)})
\end{eqnarray}
where the determinant evaluation follows by noting that it must contain $\Delta (x^{(n+1)})$ as a
factor, and is of the same degree as $\Delta (x^{(n+1)})$. The case $m=n+1$ has thus been established, provided 
$D_{n+1,N} = n!D_{n,N}$, which is indeed a property of (\ref{7e}).
\end{proof}
To find $A_N$, we introduce virtual particles $\{x_i^{(N+1)}\}_{i=1,\dots N+1}$ to the system with the
requirement that also obey \eqref{restrict}.  We note that the only possibility is $x^{(N+1)} = \{0,1,\dots, N\}$.  Following the method of the proof of Proposition \ref{pdfprop},  but beginning with the PDF
\begin{equation}
p(x^{(1)}, \dots, x^{(N+1)}) = \frac{2^{N(N+1)/2-\alpha(x^{(1)}, \dots, x^{(N)})}}{A_N}\chi(x^{(1)}, \dots, x^{(N+1)})
\end{equation}
(the term $N(N+1)/2$ in the exponent results from there now being $N+1$ lines), we end up with the single line PDF
\begin{equation}\label{onelineN1}
p(x^{(N+1)}) = \frac{2^{N(N+1)/2}\Delta (x^{(N+1)})}{A_N \prod_{i=1}^N i!}
\end{equation}
But $x_i^{(N+1)}=N+1-i$, which substituted in \eqref{onelineN1} gives
\begin{equation}
p\left(x^{(N+1)} = \{0,1,\dots, N\}\right)  =  \frac{2^{N(N+1)/2} \prod_{0\leq i < j \leq N} (j-i)}{A_N \prod_{m=1}^N m!} = 1
\end{equation}
Hence we conclude
\begin{equation}
A_N  =  2^{N(N+1)/2} \label{CN}
\end{equation}
The result \eqref{CN} for the number of domino tilings of the Aztec diamond was first derived by \cite{EKLP92}.  Since then a number of derivations distinct from those given in \cite{EKLP92} have been found, for example \cite{Jo02,EF05}.  The present derivation using the particle picture appears to be new.

It remains to compute the single line PDF, which is gotten from (\ref{pdfx}) by summing over
$x^{(m+1)},\dots, x^{(N)}$. To perform the summations we
introduce a second set of particles $y_i^{(k)}$ representing the unshaded tiles.  As with the shaded tiles, we want the $k$-th line to have $k$ particles, so we label the lines from right to left in the $y$ picture.  Because every position must have a shaded or unshaded tile, and no position can have both, the $y^{(k)}$ are defined such that 
\begin{eqnarray}
x^{(n)} \cup y^{(N+1-n)} & = & \{0,1,\dots N\} \nonumber\\
x^{(n)} \cap y^{(N+1-n)} & = & \emptyset. \label{xy}
\end{eqnarray}
Because the unshaded tiles, when viewed right to left, obey the same probabilistic law as the black tiles view left to right, the formulas for
$p(x^{(m)},\dots, x^{(n)})$ and  $p(y^{(m)},\dots, y^{(n)})$ are the same for all $m,n$.  

We now express \eqref{pdfx} as a function of the $y_i^{(k)}$.  To begin, using the fact that\begin{equation}
\Delta(\{0,1,\dots N\}) = \prod_{i=1}^N i!
\end{equation}
we have \begin{equation} \label{deltaswap}
\Delta(x^{(n)}) = \frac{\Delta(y^{(N+1-n)})\prod_{i=1}^N i!}{\prod_{i=1}^{N+1-n}y_i^{(N+1-n)}! (N-y_i^{(N+1-n)})!}.
\end{equation}
It remains to calculate $2^{\alpha(x^{(m)}, \dots, x^{(N-1)})}$ in terms of $y_i^{(k)}$.  
\begin{proposition}\label{alpha}
Consider two lines $n$, $n+1$ in a particle system as defined above, but generalised so that there are $N^*$ possible positions for particles on each line. Let this system be filled with $x$ and $y$ particles, such that every possible position has either an $x$ or $y$ particle, and no position has both.  Furthermore let the lines labels be changed to $n^*$ and $n^*-1$ respectively when considering the $y$ particles.  If line $n$ has $a$ $x$-particles $\{x_1^{(n)} ,\dots, x_a^{(n)}\}$ (and therefore $N^*-a$ $y$-particles $\{y_1^{(n^*)} ,\dots, y_{N^*-a}^{(n^*)}\}$) and line $n+1$ has $b$ $x$-particles $\{x_1^{(n+1)} ,\dots, x_b^{(n+1)}\}$ (and therefore $N^*-b$ $y$-particles $\{y_1^{(n^*-1)} ,\dots, y_{N^*-b}^{(n^*-1)}\}$)then,  for $\alpha$ defined as above, 
\begin{equation}\label{18}
\alpha(x^{(n)}) =  \alpha(y^{(n^*-1)}) +a+b-N^*
\end{equation}
\end{proposition}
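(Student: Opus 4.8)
The plan is to reinterpret each side of \eqref{18} as counting coincidences of particles across the two lines, and then to reduce the identity to elementary inclusion–exclusion over the $N^*$ available positions.

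First I would pin down the combinatorial meaning of $\alpha(x^{(n)})$. Writing $X_n$, $X_{n+1}$ for the sets of positions occupied by an $x$-particle on lines $n$ and $n+1$ respectively, I claim $\alpha(x^{(n)}) = |X_n \cap X_{n+1}|$. For each index $i$ the summand $\delta_{x_i^{(n)},x_i^{(n+1)}} + \delta_{x_i^{(n)},x_{i+1}^{(n+1)}}$ is at most $1$, since the entries of $x^{(n+1)}$ are distinct and $x_i^{(n)}$ can match at most one of them; and by the interlacing \eqref{restrict} the particle $x_i^{(n)}$ is sandwiched between $x_{i+1}^{(n+1)}$ and $x_i^{(n+1)}$, so if it coincides with any $x$-particle of line $n+1$ it must coincide with one of these two. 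Hence the summand equals $1$ exactly when $x_i^{(n)} \in X_{n+1}$, and $\alpha(x^{(n)})$ counts precisely the positions carrying an $x$-particle on both lines. Applying the same argument to the $y$-particles — which by hypothesis obey the identical interlacing law once the lines are relabelled $n \mapsto n^*$, $n+1 \mapsto n^*-1$ — gives $\alpha(y^{(n^*-1)}) = |Y_n \cap Y_{n+1}|$, where $Y_n$, $Y_{n+1}$ denote the $y$-particle positions on the two lines.

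With these two interpretations in hand the identity becomes pure counting. Since the $x$- and $y$-particles partition the $N^*$ positions on each line, every position falls into exactly one of four classes according to its type ($x$ or $y$) on line $n$ and on line $n+1$; writing $c_1, c_2, c_3, c_4$ for the cardinalities of the classes $XX$, $XY$, $YX$, $YY$, I would record the relations $c_1 + c_2 + c_3 + c_4 = N^*$, $c_1 + c_2 = a$, and $c_1 + c_3 = b$. Eliminating $c_2$ and $c_3$ immediately yields $c_1 - c_4 = a + b - N^*$, which is exactly \eqref{18} since $c_1 = \alpha(x^{(n)})$ and $c_4 = \alpha(y^{(n^*-1)})$.

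I expect the only genuine subtlety to lie in the first step: confirming that the two Kronecker deltas in the definition of $\alpha$ account for all coincidences and over-count none. This is precisely where interlacing is indispensable, for without it $x_i^{(n)}$ could agree with some $x_j^{(n+1)}$ having $j \neq i, i+1$ and the delta-sum would undercount. I would also take care to verify that the right-to-left reading used for the $y$-particles matches the orientation of the adjacency measured by $\alpha(y^{(n^*-1)})$, so that the intersection appearing is genuinely $Y_n \cap Y_{n+1}$ and not some mismatched pair of lines; once that bookkeeping is settled the concluding inclusion–exclusion is routine.
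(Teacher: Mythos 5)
Your proof is correct and is essentially the paper's argument in different bookkeeping: the paper directly identifies the non-adjacent $x$'s on line $n$ with the non-adjacent $y$'s on line $n+1$ (both are your $XY$ class), giving $a-\alpha(x^{(n)}) = N^*-b-\alpha(y^{(n^*-1)})$, which is exactly what your four-class relations $c_1+c_2+c_3+c_4=N^*$, $c_1+c_2=a$, $c_1+c_3=b$ rearrange to. Your explicit verification that the two Kronecker deltas count precisely the positional coincidences $|X_n\cap X_{n+1}|$ --- using interlacing to rule out undercounting and the strict ordering within a line to rule out overcounting --- is left implicit in the paper's appeal to the definition of adjacency, and is a worthwhile piece of added rigour rather than a different method.
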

\begin{proof}
There are $a$ $x$'s on line $n$.  Of these, $\alpha(x^{(n)})$ are adjacent.  Therefore, exactly $a-\alpha(x^{(n)})$ of the $x$'s on line $n$ are not adjacent. Noting that line $n+1$ is the lower numbered line in the $y$ picture, this means that exactly $a-\alpha(x^{(n)})$ of the $y$'s on line $n+1$ are not adjacent. Since there are, by definition of $\alpha$, $N^*-b-\alpha(y^{(n^*-1)})$ non-adjacent $y$'s on line $n+1$, 
\begin{equation}
a-\alpha(x^{(n)}) =  N^*-b-\alpha(y^{(n^*-1)}) 
\end{equation}
and so (\ref{18}) follows.
\end{proof}
Using Proposition \ref{alpha} with $a=n$, $b=n+1$, $N^*=N+1$ and $n^*=N+1-n$ gives
\begin{eqnarray}
\alpha(x^{(n)}) &= & \alpha(y^{(N-n)}) +2n-N \\
\label{alphaswap1} \alpha(x^{(m)},\dots,x^{(N-1)}) & = & \alpha(y^{(1)}, \dots, y^{(N-m)}) + (N-m)(m-1)
\end{eqnarray}
So, applying \eqref{deltaswap} and \eqref{alphaswap1} to  \eqref{pdfx} we have
\begin{equation}\label{1-mpdf}
p(y^{(1)}, \dots, y^{(n)}) = \frac{\prod_{i=0}^{n-1} (N-i)!}{\prod_{i=1}^{n}y_i^{(n)}! (N-y_i^{(n)})!}\frac{\Delta(y^{(n)})\chi(y^{(1)}, \dots, y^{(n)})}{2^{N+(N-n)(n-1) + \alpha(y^{(1)}, \dots, y^{(n-1)})}}
\end{equation}
But we know $p(x^{(1)}, \dots, x^{(n)}) = p(y^{(1)}, \dots, y^{(n)})$. Finally, using the same inductive method from the proof of Proposition  \ref{pdfprop}, we end up with
\begin{equation}\label{1line}
p(x^{(n)}) = \frac{\Delta(x^{(n)})^2}{2^{N+(N-n)(n-1)} \prod_{i=1}^{n}x_i^{(n)}! (N-x_i^{(n)})!}\prod_{i=0}^{n-1}\frac{(N-i)!}{i!}
\end{equation}
This has been derived using different arguments in \cite{Jo01x} (in particular the weighted
particle system is not specified by (\ref{3})), where it is 
recognised as a particular example of a discrete orthogonal polynomial unitary
ensemble based on a the  Krawtchouk weight with $p=1/2$) (see the Appendix).

\section{Large $N$ limits}
In \cite{JN06} the weighted particle process corresponding to an Aztec diamond tiling,
defined through its correlations and restricted to the first $n$ lines, was shown in a certain
scaling limit to coincide with the minor process for a certain ensemble of random matrices.
These random matrices are the ensemble of complex Gaussian matrices $X$with measure
proportional to $e^{-{\rm Tr} \, X^2/2}$, to be denoted GUE${}^*$ (conventionally
the GUE ensemble has measure proportional to $e^{-{\rm Tr} \, X^2}$). The minor process is formed
out of the correlated eigenvalues $\cup_{j=1}^n \{ z^{(j)} \}$, $z^{(j)} = (z_1,z_2,\dots,z_n)$ denoting the
eigenvalues of the $j$-th minor. This is known \cite{Ba01a} to have joint PDF 
\begin{equation}\label{gminor}
{1 \over C_n} \prod_{l=1}^{n} e^{- (z_l^{(n)})^2/2} \prod_{1 \le j < k \le n}
(z_j^{(n)} - z_k^{(n)}) \prod_{j=1}^{n-1} \chi(z^{(j+1)} > z^{(j)})
\end{equation}
where, with $\chi_A$ the indicator function for the set $A$,
$$
\chi(z^{(j+1)} > z^{(j)}) := \chi_{z_1^{(j+1)} > z_1^{(j)} > \cdots > z_j^{(j+1)} >
z_j^{(j)} > z_{j+1}^{(j+1)}}
$$
and the normalization $C_n$ is given by $C_n = (2 \pi)^{n/2}$.
We can show directly that that the joint PDF for the weighted particle process tends to
(\ref{gminor}) in an appropriate limit.

\begin{proposition}\label{large N}
Let the points $z_i^{(j)} := (2y_i^{(j)}-N)/\sqrt{N}$ be a rescaling of the points $y_i^{(j)}$, where $N$ is the order of the Aztec diamond as described above.  Given that the $y_i^{(j)}$ have PDF $p$ as described in \eqref{1-mpdf}, one has\begin{eqnarray}
p(y^{(1)}, \dots, y^{(n)}) \rightarrow p^*(z^{(1)}, \dots, z^{(n)}) & as & N \rightarrow \infty \end{eqnarray} where $p^*$ is the PDF  for the GUE${}^*$ minor process as specified by \eqref{gminor}.
\end{proposition}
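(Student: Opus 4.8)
The plan is to read the stated convergence as a limit of probability \emph{densities}. Since the rescaled points $z_i^{(j)}=(2y_i^{(j)}-N)/\sqrt N$ live on a lattice of spacing $2/\sqrt N$ in each of the $n(n+1)/2$ coordinates, the discrete mass $p(y^{(1)},\dots,y^{(n)})$ approximates the continuum density $p^*$ multiplied by the cell volume $(2/\sqrt N)^{n(n+1)/2}$; equivalently, I would establish
\[
\left(\frac{\sqrt N}{2}\right)^{n(n+1)/2} p(y^{(1)},\dots,y^{(n)}) \longrightarrow p^*(z^{(1)},\dots,z^{(n)}).
\]
Before touching the analytic factors I would dispose of the combinatorial ones in \eqref{1-mpdf}. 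The change of variables is increasing and affine, so the interlacing encoded by $\chi(y^{(1)},\dots,y^{(n)})$ passes verbatim to $\prod_{j=1}^{n-1}\chi(z^{(j+1)}>z^{(j)})$, the locus where some inequality degenerates to an equality being a null set. Moreover, at any interior point the strict interlacing in $z$ forces $|y_i^{(k)}-y_j^{(k+1)}|\sim\sqrt N\to\infty$, so no two particles on adjacent lines coincide once $N$ is large; hence $\alpha(y^{(1)},\dots,y^{(n-1)})=0$ and the factor $2^{-\alpha}$ is identically $1$ in the limit.

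Next I would rescale the two remaining $N$-dependent blocks. The Vandermonde involves only the top line, and $y_i^{(n)}-y_j^{(n)}=\tfrac{\sqrt N}{2}(z_i^{(n)}-z_j^{(n)})$ gives $\Delta(y^{(n)})=(\sqrt N/2)^{\binom n2}\Delta(z^{(n)})$. The heart of the estimate is the factorial factor $\prod_{i=1}^n 1/[y_i^{(n)}!(N-y_i^{(n)})!]=\prod_i \binom{N}{y_i^{(n)}}/N!$. Writing $y=(N+\sqrt N z)/2$, the local central limit theorem (de Moivre--Laplace) for the symmetric binomial gives $\binom N y 2^{-N}=\sqrt{2/(\pi N)}\,e^{-z^2/2}(1+o(1))$ uniformly for $z$ in compact sets, so each such factor behaves like $(2^N/N!)\sqrt{2/(\pi N)}\,e^{-z^2/2}$. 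For the numerator I would use $(N-i)!=N!\,N^{-i}(1+O(1/N))$ for each fixed $i$, whence $\prod_{i=0}^{n-1}(N-i)!=(N!)^n N^{-\binom n2}(1+o(1))$.

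It then remains to collect the powers. Assembling the Jacobian prefactor, the rescaled Vandermonde, the binomial asymptotics, and the explicit factor $2^{-(N+(N-n)(n-1))}$, the $(N!)^n$ cancels and I would track the exponents of $N$, of $2$, and of $\pi$ separately. A short bookkeeping shows: the $\sqrt N$ contributions from the Jacobian and the Vandermonde combine to $N^{n^2/2}$, which against the $N^{-\binom n2}$ from the falling factorials and the $N^{-n/2}$ from the central-limit normalisations yields $N^0$; the net power of $2$ is $2^{-n/2}$; and the binomials supply $\pi^{-n/2}$, so the residual constant is exactly $2^{-n/2}\pi^{-n/2}=(2\pi)^{-n/2}=1/C_n$. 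Together with $\prod_l e^{-(z_l^{(n)})^2/2}$ from the binomials, $\Delta(z^{(n)})$ from the Vandermonde, and the interlacing indicator, this reproduces \eqref{gminor} exactly.

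The main obstacle is precisely this asymptotic analysis of the factorial factor together with the exact bookkeeping of constants. A clean Gaussian requires a \emph{uniform} local limit estimate for $\binom N y 2^{-N}$ on compact $z$-sets, and one must be scrupulous about the $O(1/N)$ corrections in both the binomial and in $(N-i)!$, since several of these are multiplied together. The truly delicate point is that the many powers of $N$ arising from the Jacobian, the Vandermonde, the central-limit normalisation $\sqrt N$, and the falling-factorial correction $N^{-\binom n2}$ must cancel to give $N^0$: an error of even $N^{\pm1/2}$ anywhere would destroy the limit, so this is where I would concentrate the care, after which the emergence of exactly $(2\pi)^{-n/2}$ serves as the internal check that the accounting is consistent.
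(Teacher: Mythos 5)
Your proposal is correct and follows essentially the same route as the paper: a change of variables with Jacobian $\left(\sqrt{N}/2\right)^{n(n+1)/2}$, factorial asymptotics (your de Moivre--Laplace local limit estimate for $\binom{N}{y}2^{-N}$ is exactly the paper's Stirling approximation $(aN+b\sqrt{N}+c)! \sim \sqrt{2\pi aN}(aN)^{aN+b\sqrt{N}+c}e^{b^2/2a-aN}$ repackaged), and the same cancellation of the powers of $N$ leaving the constant $(2\pi)^{-n/2}=1/C_n$ of \eqref{gminor}. Your explicit treatment of the combinatorial factors --- that strict interlacing of the $z$'s forces $\alpha(y^{(1)},\dots,y^{(n-1)})=0$ for large $N$ so $2^{-\alpha}\to 1$, and that degenerate interlacings form a null set --- supplies details the paper's proof passes over silently, and your exponent bookkeeping checks out exactly.
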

\begin{proof}
Let $y = (z\sqrt{N}+ N)/2 :=g(z)$.  Then 
\begin{equation}\label{pN}
p(y^{(1)}, \dots, y^{(n)})  = p\left(g(z^{(1)}), \dots, g(z^{(n)})\right)
\prod_{i=1}^j \prod_{j=1}^n  g'(z_i^{(j)})
\end{equation}
Clearly, $g'(z_i^{(j)}) = \sqrt{N}/2$, so 
\begin{equation}
\prod_{\substack{i=1,\dots, j \\ j=1,\dots, n}} g'(z_i^{(j)}) = \Big ( \frac{N}{4} \Big )^{n(n+1)/4}
\end{equation}
We now wish to compute $p\left(g(z^{(1)}), \dots, g(z^{(n)})\right)$ in the limit $N \rightarrow \infty$.  Applying forms of Stirling's approximation (for large $N$)
\begin{eqnarray*}
\label{SA1}(aN+b)! & \sim & \sqrt{2\pi a N} (a N)^{a N + b} e^{-a N} \\
\label{SA2}(aN +b\sqrt{N} +c)! & \sim & \sqrt{2\pi a N} (a N)^{aN +b\sqrt{N} +c} e^{b^2/2a-aN} 
\end{eqnarray*}
to \eqref{1-mpdf}, and using \eqref{pN} we have
\begin{equation}
p(y^{(1)}, \dots,y^{(n)}) \sim  \frac{\Delta(z^{(n)}) }{(2\pi)^{n/2}} e^{ -\sum_{i=1}^n \frac{1}{2}(z_i^{(n)})^2}  \prod_{j=1}^{n-1} \chi(z^{(j+1)} > z^{(j)})
\end{equation}
which is \eqref{gminor}.
\end{proof}

We would also like to compute the region of support for this particle system.  In the Aztec diamond tiling this corresponds to the boundary of the disordered region.
The region of support on any given line $n$ is the interval $[a_n,b_n]$ in which, for a large enough number of particles, all the particles will lie within that region with probability $1$.  Since we are taking $n$, the number of particles, to be large and $n \leq N$ we must take $N$, the number of lines, to be large also. It thus makes sense to scale $n$ so that the `line label' $s= n/N$ is a real number in $[0,1]$. The region of support of the system will be the areas in between the graphs of $a(s)$, $b(s)$.  To calculate the functional form of these boundaries from \eqref{1line}, one approach would be to use the fact that this PDF relates to the  Krawtchouk ensemble. The necessary details have been
given in \cite{JPS98}. Here we give a more physically motivated derivation, based on a log-gas picture \cite{Fo10}.

The Boltzmann factor for a log-gas of $N_p$ particles has the form 
\begin{equation}\label{loggas}
\prod_{1\leq i < j \leq N_p} |x_i-x_j|^\beta \prod_{k=1}^{N_p} e^{-\beta V(x_k)}
\end{equation}
where $\beta$ denotes the inverse temperature and $V(x)$ is a one body potential, due to background charge density $-\rho(x)$.  Explicitly,
\begin{equation}\label{V}
V(x) := \int_a^b \rho(t) \log |t-x| dt.
\end{equation}
A hypothesis of the log-gas picture is that for large $N_p$ and to leading order the particle charge density and background charge density cancel, so that the particle density is to leading order
equal to $\rho(x)$. 

In the cases that $\rho(x)$ is supported on a single interval $[a,b]$ (which we expect for the log-gas interpretation of \eqref{1line}), normalization of the density requires
\begin{equation}\label{intdens}
\int_a^b \rho(t) dt = N_p.
\end{equation}
Furthermore, the explicit form of $\rho(x)$ obtained by solving the integral equation \eqref{V} is known
in terms of $V(x)$, and the boundary of the support is determined by the equations \cite{Fo10}
\begin{eqnarray}
\label{support0}  \int_a^b \frac{V'(t)}{\sqrt{(b-t)(t-a)}} dt & = & 0 \\
\label{support1} \int_a^b \frac{tV'(t)}{\sqrt{(b-t)(t-a)}} dt & = & \pi N_p
\end{eqnarray}

As written, \eqref{1line} is a lattice gas variant of the log-gas \eqref{loggas} in the case $\beta = 2$.  In the limit $n \rightarrow \infty$, the lattice gas approaches the continuum log-gas upon the substitution
\begin{equation}
x_i^{(n)} = Nt_i^{(n)}
\end{equation}
where, to leading order in $N$, $0 \leq t_i^{(n)} \leq 1$.  In terms of the co-ordinate $t_i^{(n)} = t_i$,  the one body factor in \eqref{loggas} reads
\begin{equation}
e^{-2V(t)} = \frac{1}{(Nt)!(N-Nt)!}
\end{equation}
and recalling $s=N/n$ shows $N_p = Ns$.
Thus solving \eqref{support0} and \eqref{support1} in the limit $N\rightarrow \infty$  gives $(a(s), b(s))$, the support in the variable $t$.

From \eqref{restrict} we know that $a, b \in [0,1]$.  Noting that $V(t) = V(1-t)$ and inserting this into \eqref{support0}, we have that for some $c \in [0,\frac{1}{2}]$, $a=\frac{1}{2}-c$, $b=\frac{1}{2} +c$.  Computing
\begin{equation}
\lim_{N\rightarrow \infty} \frac{2V'(y)}{N} = \log\left(\frac{y}{1-y}\right)
\end{equation}
leaves us with 
\begin{equation}
\int_{\frac{1}{2}-c}^{\frac{1}{2}+c} \frac{t \log(t/(1-t))}{\sqrt{(\frac{1}{2}+c-t)(t+c-\frac{1}{2})}} dt = 2\pi s
\end{equation}
to solve for $c=c(s)$.  The change of variables $t = \frac{1}{2}+u$ leads us to a more managable
\begin{equation}\label{intcs}
\int_{-c}^c \frac{u \log(1+2u)}{\sqrt{c^2-u^2}}du = \pi s.
\end{equation}
The integral can be computed exactly (a computer algebra package was used), giving
\begin{equation}
1- \sqrt{1-4c^2} = 2s.
\end{equation}
Recalling that $c \in [0, \frac{1}{2}]$, we see that this has a solution for $s \in [0, \frac{1}{2}]$ only.
For $s \in (\frac{1}{2}, 1]$, physical interpretation of the relationship between $s$ and $c$ (increasing $s$, the number of particles, must not decrease $2c$, the size of their support) leads us to define $c(s) :=\frac{1}{2}$ for $s \in (\frac{1}{2}, 1]$.  So we have
\begin{equation}\label{fr}
a(s) = \left\{\begin{array}{ll} \frac{1}{2}(1-\sqrt{1-(1-2s)^2}) & s \in [0,\frac{1}{2}] \\ 0 & s \in (\frac{1}{2},1]\end{array} \right.
\end{equation}
\begin{equation}
b(s) = \left\{\begin{array}{ll} \frac{1}{2}(1+\sqrt{1-(1-2s)^2}) & s \in [0,\frac{1}{2}] \\ 1 & s \in (\frac{1}{2},1]\end{array} \right.
\end{equation}

Noteworthy here is that, because of the mirrored nature of the shaded and unshaded tiles, the area of support of the unshaded tiles is related to the area of support of the shaded tiles by
\begin{eqnarray}
a_{\rm shaded}(s) = a_{\rm unshaded}(1-s), & & b_{\rm shaded}(s) = b_{\rm unshaded}(1-s),
\end{eqnarray}
so the disordered region of the Aztec diamond tiling, the area that has both shaded and unshaded tiles, is a 
perfect circle --- the Arctic circle --- in the limit $N\rightarrow \infty$ \cite{JPS98}.

\section{The half Aztec diamond}
Consider an Aztec diamond of order $N=2(M+1)$ rotated by forty five degrees as in
Figure \ref{Aztec Shading}.  Define a restriction on the tiling of this Aztec diamond such that in the particle picture as defined above, a particle at $x$ on line $j$ implies no particle at $x$ on line $N+1-j$.  Because of the interlacing restriction, this means that in the tiling picture
the whole middle column between lines $k=M+1$ and $M+2$ will consist of
squares formed from a pair of dominoes rotated $45^o$.  If we delete all these squares we are left with two halves.  We will call these half Aztec diamonds of order $M$.  The present half
Aztec diamond model bears some resemblance to the Aztec diamond with barriers
introduced in \cite{PS99a}.

By construction, the tiling corresponding to two half Aztec diamonds are mirror images. We will call any
tiling of an Aztec diamond of order $N = 2(M+1)$ formed from two half Aztec diamonds symmetric.
With $C^*_N$ the number of symmetric tilings of an Aztec diamond of order $N$ and
$H_M$ the number of tilings of a half Aztec diamond of order $M$, we therefore have
\begin{equation} 
C^*_{2(M+1)} = H_M 2^{M+1}
 \end{equation} 
 Here the factor of $2^{M+1}$ corresponds to the number of tilings of the deleted squares. 
 
 We would like to use the particle picture to compute $H_M$. We begin by noting that the joint
 PDF for the weighted particle system is
\begin{equation}\label{46}
p(x^{(1)},\dots, x^{(M+1)}) = \frac{2^{M(M+1)/2 -\alpha(x^{(1)}, \dots, x^{(M)})}}{H_M}\chi(x^{(1)}, \dots, x^{(M)})
\end{equation}
(cf.~(\ref{3})), 
with the additional restriction that 
\begin{equation}\label{46a}
x_i^{(M+1)} = 2M +3-2i
\end{equation}
 Using the method of derivation of (\ref{pdfx}) it follows from this that
\begin{equation}
p(x^{(M+1)}) =  \frac{2^{M(M+1)/2}\Delta (x^{(M+1)})}{H_M \prod_{m=1}^M m!}
\end{equation}
which must equal $1$ for 
\begin{equation}\label{46a}
x^{(M+1)} = \{1, 3, \dots, 2M+1\} 
\end{equation}
Consequently 
\begin{equation}\label{H}
H_M = 2^{M(M+1)}
\end{equation}
 and $C^*_{2N} = 2^{N^2}$. Note that
$$
\lim_{N \to \infty} {1 \over N^2} \log A_N = 2 \lim_{N \to \infty} {1 \over N^2}
\log H_M \Big |_{M = N/2 - 1}
$$
as to be expected from the interpretations of these quantities as entropies for the tiling problem.

There is a second particle system associated with symmetric tilings. This is obtained by rotating the
half Aztec diamond --- which has $M$ vertical lines --- by $90^o$ to obtain a half Aztec diamond
positioned with long side horizontal and thus having $N = 2(M+1)$ vertical lines (recall
Figure \ref{Aztec Shading}). The first of these is empty of particles and last one is full. Ignoring
these two lines we have $2M$ lines where successive lines $2n-1$ and $2n$ ($n=1,\dots,M$)
have $n$ particles. We would like to develop the properties of this particle system.

Analogous to (\ref{46}), although with $H_M$ substituted by its evaluation (\ref{H}), the joint PDF
for this weighted particle system is
 \begin{equation}\label{half pdf}
p(x^{(1)},\dots, x^{(2M)}) = \frac{\chi^{*}(x^{(1)}, \dots, x^{(2M)})}{2^{M+\alpha(x^{(1)},\dots, x^{(2M-1)})}}
\end{equation}
where $\chi^*$ is the same as $\chi$ in \eqref{chi}, except the first restriction is changed to {\small $1\leq x_{i+1}^{(j)} < x_i^{(j)} \leq M+1$}.
\begin{proposition}\label{pp}
Let
\begin{equation}\label{HH}
H_{2m-1,M} = 2^{M}\prod_{i=1}^{m-1} (2i)! \qquad H_{2m,M} = \frac{2^M\prod_{i=1}^{m} (2i)!}{2^mm!}
\end{equation}
For $p$ as defined in \eqref{half pdf},
\begin{eqnarray}
p(x^{(2n-1)},\dots, x^{(2M)}) & = & \frac{\Delta(x^{(2n-1)}) S(x^{(2n-1)})}{H_{2n-1,M} 2^{\alpha(x^{(2n-1)},\dots, x^{(2M-1)})}} \label{odd}\\ 
p(x^{(2n)},\dots, x^{(2M)}) & = & \frac{\Delta(x^{(2n)}) S(x^{(2n)})\prod_{i=1}^n( x_i^{(2n)} -\frac{1}{2})}{H_{2n,M} 2^{\alpha(x^{(2n)},\dots, x^{(2M-1)})}} \label{even}
\end{eqnarray}
where $S(x^{(n)}) := \prod_{1\leq i<j\leq n} (x_i^{(n)} +x_j^{(n)}-1)$
\end{proposition}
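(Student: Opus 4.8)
The plan is to argue by induction exactly as in the proof of Proposition~\ref{pdfprop}, summing out one line at a time, but now alternating between two distinct inductive steps to account for the fact that lines $2n-1$ and $2n$ both carry $n$ particles. The key preliminary observation is that both target expressions are determinants of monomials in the shifted variable $u_i := x_i^{(n)} - \tfrac12$. Indeed, since $(x_i - x_j)(x_i + x_j - 1) = u_i^2 - u_j^2$, one has $\Delta(x^{(n)}) S(x^{(n)}) = \prod_{i<j}(u_i^2 - u_j^2) = \det[ u_i^{2(j-1)} ]_{i,j=1}^n$ and $\Delta(x^{(n)}) S(x^{(n)}) \prod_{i=1}^n (x_i^{(n)} - \tfrac12) = \det[ u_i^{2j-1} ]_{i,j=1}^n$. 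Thus \eqref{odd} is a determinant of \emph{even} powers of $u$ and \eqref{even} a determinant of \emph{odd} powers, so the whole proposition reduces to the statement that summing out one line flips the parity of these powers and multiplies the normalization by a known constant. The base case is \eqref{odd} at $n=1$, where $\Delta(x^{(1)})=S(x^{(1)})=1$ and $H_{1,M}=2^M$, so that \eqref{odd} collapses to \eqref{half pdf}; throughout, the formulas are understood to hold on the support cut out by $\chi^*$.

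For the first step I would sum out line $2n-1$ in \eqref{odd}, subject to its interlacing with line $2n$ (both lines carrying $n$ particles). As in Proposition~\ref{pdfprop}, the adjacency weights $2^{-\alpha}$ place a factor $\tfrac12$ at each endpoint of each single-particle sum, and the antisymmetry of $\det[ u_i^{2(j-1)} ]$ turns the constrained multiple sum into $\det[ \sum_t (t-\tfrac12)^{2(j-1)}/2^{\delta_{t,a_i}+\delta_{t,b_i}} ]$. Each entry is a polynomial in its endpoints whose top-degree part is the antiderivative $\tfrac{(t-1/2)^{2j-1}}{2j-1}$ evaluated at the endpoints, the lower-degree terms sharing their endpoint dependence across rows and hence being removable by column operations. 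Because this is a within-pair summation abutting the reflecting wall at $x=1$, the fixed endpoint contributes only a column-removable constant, leaving the odd power $(x_i^{(2n)}-\tfrac12)^{2j-1}/(2j-1)$. The determinant then collapses to $\tfrac{1}{(2n-1)!!}\det[ (x_i^{(2n)}-\tfrac12)^{2j-1} ] = \tfrac{1}{(2n-1)!!}\,\Delta(x^{(2n)})S(x^{(2n)})\prod_i(x_i^{(2n)}-\tfrac12)$, establishing \eqref{even} provided $H_{2n,M} = (2n-1)!!\,H_{2n-1,M} = \tfrac{(2n)!}{2^n n!}H_{2n-1,M}$, which is precisely the ratio of the two formulas in \eqref{HH}.

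For the second step I would sum out line $2n$ in \eqref{even}, subject to its interlacing with line $2n+1$, which now carries $n+1$ particles; this is the exact analogue of the single step in Proposition~\ref{pdfprop}. Here the monomials are the odd powers $u_i^{2j-1}$, their antiderivatives are the even powers $\tfrac{(t-1/2)^{2j}}{2j}$, and both endpoints $a_i,b_i$ vary and telescope. Passing to the squared variable $w_k := (x_k^{(2n+1)}-\tfrac12)^2$, the resulting $n\times n$ determinant of differences of $w^j$ telescopes into the $(n+1)\times(n+1)$ Vandermonde $\Delta(w^{(2n+1)}) = \Delta(x^{(2n+1)})S(x^{(2n+1)})$, exactly as $\det[b_i^j-a_i^j]\to\Delta(x^{(n+1)})$ did in Proposition~\ref{pdfprop}, with $\prod_{j=1}^n \tfrac1{2j} = \tfrac{1}{2^n n!}$ supplying the constant. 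This yields \eqref{odd} with $n\mapsto n+1$ provided $H_{2n+1,M} = 2^n n!\,H_{2n,M}$, again a property of \eqref{HH}, and together with the base case the alternation of the two steps establishes the proposition.

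The two endpoint-sum evaluations and the bookkeeping of the constants $(2n-1)!!$ and $2^n n!$ are routine, following the template of Proposition~\ref{pdfprop}. The one genuinely new point, and the step I expect to be the main obstacle, is the first: one must confirm that the within-pair interlacing really does act as a one-sided sum against the reflecting wall, so that the parity of the powers of $u$ flips from even to odd and the extra factor $\prod_i(x_i^{(2n)}-\tfrac12)$ is \emph{generated} rather than the particle number increasing (as it does in the between-pair step). Checking that the wall endpoint produces only column-removable contributions, and that the two different normalization recursions correctly reassemble into \eqref{HH}, is where the care is required.
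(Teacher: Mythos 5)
Your proposal follows the paper's proof of Proposition \ref{pp} essentially step for step: the same alternating two-step induction (sum out the odd line $2n-1$ against the even line $2n$, then the even line $2n$ against line $2n+1$), the same rewriting of $\Delta(x)S(x)$ and $\Delta(x)S(x)\prod_i(x_i-\tfrac12)$ as determinants of even and odd powers of $u_i = x_i-\tfrac12$, the same leading-term-plus-column-operations evaluation of the summed determinants, and the same normalization recursions $H_{2n,M}=(2n-1)!!\,H_{2n-1,M}$ and $H_{2n+1,M}=2^n n!\,H_{2n,M}$, which do match \eqref{HH}. The base case and the even-to-odd step are handled exactly as in the paper (the latter is a verbatim transcription of Proposition \ref{pdfprop} in the variable $w_i=(x_i-\tfrac12)^2$).

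There is, however, one genuine flaw, and it sits precisely at the point you yourself flagged as the crux: the wall row in the odd-to-even step. You assert that the fixed endpoint at $x=1$ ``contributes only a column-removable constant.'' It does not: after the telescoping row operations (using $a_i=b_{i-1}$), a constant $C_j$ appears in column $j$ of \emph{every} row, and since no column of the matrix is a pure constant, column operations cannot eliminate it in general; most starkly, in the first summation ($n=1$, a $1\times1$ determinant) no column operations exist at all, yet the formula must already hold there. What actually saves the argument is that the wall contribution is exactly zero. The bottom particle's sum is $\sum_{t=1}^{b_1} 2^{-\delta_{t,b_1}}\,(t-\tfrac12)^{2(j-1)}$ --- note \emph{full} weight at $t=1$, because the wall is not an adjacency (the paper's formula for $d_{1,j}$ records exactly this) --- and since the summand is even about $t=\tfrac12$, this one-sided sum equals half the endpoint-halved (trapezoidal) sum over the reflected range $\{1-b_1,\dots,b_1\}$. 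The trapezoidal discrete antiderivative $G_j$ of an even polynomial in $u=t-\tfrac12$ may be chosen odd in $u$ (e.g.\ $G_2(u)=u^3/3+u/6$ for the weight $(t-\tfrac12)^2$), whence the entry equals $\tfrac12\bigl(G_j(u_{b_1})-G_j(-u_{b_1})\bigr)=G_j(u_{b_1})$, a purely odd polynomial with leading term $u_{b_1}^{2j-1}/(2j-1)$ and \emph{no} constant term; the lower odd powers are then removable by column operations exactly as in Proposition \ref{pdfprop}. Had one placed the usual half weight at the wall, the leftover constant would be $-G_j(\tfrac12)\neq0$ and \eqref{even} would be false, so the full weight at $t=1$ combined with the reflection symmetry about $t=\tfrac12$ is the one genuinely new ingredient here, and your proposal as written does not supply it.
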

\begin{proof}
We proceed as in the proof of Proposition \ref{pdfprop}.
The $2n-1=1$ case is true from \eqref{half pdf}.
Assume the $n=2m-1$ case is true. Then summing on the $(2m-1)$-th line gives
\begin{eqnarray}
p(x^{(2m)}, \dots, x^{(2M)}) & = &\sum_{x_1^{(2m-1)} = x_2^{(2m)}}^{x_1^{(2m)}}\dots\sum_{x_m^{(2m-1)} = 1}^{x_m^{(2m)}} \frac{\Delta(x^{(2m-1)}) S(x^{(2m-1)})}{H_{2m-1,M} 2^{\alpha(x^{(2m-1)},\dots, x^{(2M-1)})}} \nonumber \\
& = & \frac{1}{H_{2m-1,M} 2^ {\alpha(x^{(2m)}, \dots, x^{(2M-1)})}} \det\left[d_{i,j}\right]_{i,j=1,\dots,m}
\end{eqnarray}
where, with $a_i = x_{m-i+2}^{(2m)}$, $b_i= x_{m-i+1}^{(2m)}$
\begin{align}
d_{i,j} = \left\{\begin{array}{ll} \sum_{t=1}^{b_1}   2^{-\delta_{t,b_1}}  (t-\frac{1}{2})^{2(j-1)}, & i=1 \\ 
\sum_{t=a_i}^{b_i} 2^{-\delta_{t,a_i}-\delta_{t,b_i}} (t-\frac{1}{2})^{2(j-1)} {}, & i=2,\dots,m \end{array}\right.
\end{align}
This implies
\begin{equation}
p(x^{(2m)}, \dots, x^{(2M)}) = \frac{2^mm!\prod_{i=1}^m (x_i^{(2m)} -\frac{1}{2})}{H_{2m-1,M}(2m)! 2^{\alpha(x^{(2m)}, \dots, x^{(2M-1)})}}\det\left[ \Big (x_{m-i-1}^{(2m)}-\frac{1}{2} \Big )^{2(j-1)}\right]_{i,j=1,\dots,m}
\end{equation}
which recalling (\ref{HH}) establishes the case $n=2m$.  Summing now on line $2m$ gives
\begin{equation}
p(x^{(2m+1)}, \dots, x^{(2M)}) = \frac{\Delta(x^{(2m+1)}) S(x^{(2m+1)})
}{2^m m!H_{2m,M} 2^{\alpha(x^{(2m+1)},\dots, x^{(2M-1)})}} 
\end{equation}
and this  recalling (\ref{HH}) establishes the case $n=2m+1$. 
\end{proof}
We want to use Proposition \ref{pp} to deduce the one line PDFs.
For this we again introduce particles $y_i^{(j)}$ representing all lattice sites not occupied by an $x$ particle:
\begin{eqnarray}
x^{(n)} \cup y^{(2M+1-n)} & = & \{1,\dots M+1\} \\
x^{(n)} \cap y^{(2M+1-n)} & = & \emptyset.
\end{eqnarray}
Using the fact that
\begin{equation}
S(\{1,\dots, M+1\}) = \prod_{i=1}^{M}\frac{(2i)!}{i!}
\end{equation}
we have
\begin{equation}
S(x^{(n)}) = S(y^{(2M-n+1)}) \prod_{i=1}^r(2y^{(2M-n+1)}_i-1) \prod_{i=1}^r \frac{(y^{(2M-n+1)}_i-1)!}{(y^{(2M-n+1)}_i+M)!}\prod_{j=1}^{M}\frac{(2j)!}{j!}
\end{equation}
where $r = |y^{(2M-n+1)}|$.  It is also not hard to compute that
\begin{equation}
\prod_{i=1}^n(x_i^{(2n)} -\frac{1}{2}) = \frac{(2M+2)!}{2^{2M+2}(M+1)!} \frac{1}{\prod_{i=1}^{M-n+1} (y_i^{(2(M-n+1)-1)}-\frac{1}{2})}
\end{equation}
Using Proposition \ref{alpha} we have that
\begin{equation}
\alpha (x^{(n)})  =  n -M+ \alpha(y^{(2M-n)}) 
\end{equation}
so, changing from $x$ to $y$ in \eqref{odd} and \eqref{even} respectively we obtain
\begin{eqnarray}\label{66}
p(y^{(1)},\dots, y^{(2m-1)}) & = & \frac{(M+1-m)!}{(M+1)!}\frac{\prod_{j=0}^{m-1}(2(M-m+j+2))!}{2^{\alpha(y^{(1)},\dots, y^{(2m-2)})+2m(M-m+1)+m}} \nonumber \\
\label{oddy}& &\times \;\;\;\;\; \frac{\Delta(y^{(2m-1)})S(y^{(2m-1)})}{\prod_{i=1}^{m}(y^{(2m-1)}_i+M)!(M+1-y_i^{(2m-1)})!}\\  
p(y^{(1)},\dots, y^{(2m)}) & = &\frac{\prod_{j=1}^m (2(M-m+j))!}{2^{\alpha(y^{(1)},\dots, y^{(2m-1)})+2m(M-m)}} \nonumber\\
\label{eveny} & & \times \;\;\;\;\; \frac{\Delta(y^{(2m)})S(y^{(2m)})\prod_{i=1}^m(y^{(2m)}_i-\frac{1}{2})}{\prod_{i=1}^{m}(y^{(2m)}_i+M)! (M+1-y_i^{(2m)})!}
\end{eqnarray}
Using the method of the proof of Proposition \ref{pp} we compute from these that
the one-line PDFs are
\begin{eqnarray}\label{67}
p(y^{(2m-1)}) & = & \frac{(M+1-m)!}{(M+1)!}\prod_{j=0}^{m-1}\frac{(2(M-m+j+2))!}{(2j)!}\frac{1}{2^{2m(M-m+1)+ m}} \nonumber\\
\label{odd1line}& & \times \;\;\;\;\; \frac{\Delta(y^{(2m-1)})^2S(y^{(2m-1)})^2}{\prod_{i=1}^{m}(y^{(2m-1)}_i+M)!(M+1-y_i^{(2m-1)})!} \\
p(y^{(2m)}) & = & \frac{m!}{2^{2m(M-m)-m}}\prod_{j=1}^m \frac{(2(M-m+j))!}{(2j)!}\nonumber \\
\label{even1line}& & \times \;\;\;\;\; \frac{\Delta(y^{(2m)})^2S(y^{(2m)})^2\prod_{i=1}^m(y^{(2m)}_i-\frac{1}{2})^2}{\prod_{i=1}^{m}(y^{(2m)}_i+M)! (M+1-y_i^{(2m)})!}\label{even1line}
\end{eqnarray}

We remarked above that the one-line PDF (\ref{1line}) corresponds to a discrete orthogonal polynomial unitary ensemble based on a particular Krawtchouk weight. As detailed in the Appendix, (\ref{66}) and
(\ref{67}) may be regarded as type $B$ versions of the same ensembles.

We saw in Proposition \ref{large N} that the particle system for the Aztec diamond in the large
$N$ limit relates to the
GUE minor process. This is also true of the particle system  associated with rhombi tiling
of the hexagon revised in the Introduction \cite{JN06}. In the case of rhombi tiling of a half
hexagon $(2a,b,b)$, cut horizontally along the side $2a$ (recall Figure \ref{Hexagon}), it
is shown in \cite{FN08a} that in the large $a$ limit the particle process converges to the
eigenvalue process for the minors of anti-symmetric GUE matrices.
Here we will show that this remains true of the particle system for the half Aztec diamond.

The anti-symmetric GUE is the probability on purely imaginary $n \times n$
Hermitian matrices with measure
proportional to $e^{-{\rm Tr} \, X^2/2}$. Using the same notation as in (\ref{gminor}),
we know from \cite{FN08a} that the joint PDF for the positive eigenvalues of the minor is given
by
\begin{equation}\label{agminor1}
{1 \over C_n} \prod_{l=1}^{n/2} e^{- (x_l^{(n)})^2} \prod_{1 \le j < k \le n/2}
\left((x_j^{(n)})^2 - (x_k^{(n)})^2\right) \prod_{j=1}^{n-1} \chi(x^{(j+1)} > x^{(j)}),
\end{equation}
for $n$ even and
\begin{equation}\label{agminor2}
{1 \over C_n} \prod_{l=1}^{(n-1)/2} x_l^{(n)}e^{- (x_l^{(n)})^2} \prod_{1 \le j < k \le (n-1)/2}
\left((x_j^{(n)})^2 - (x_k^{(n)})^2\right) \prod_{j=1}^{n-1} \chi(x^{(j+1)} > x^{(j)}),
\end{equation}
for $n$ odd. Here
$$
C_n = \left \{\begin{array}{ll} \pi^{n/4} 2^{-n^2/4}, & n \: {\rm even} \\
  \pi^{(n-1)/4} 2^{-n(n-1)/4}, & n \: {\rm odd}  \end{array}  \right.
$$
and it is understood that $x^{(1)} = 0$. A straight forward limiting procedure applied to \eqref{oddy} and \eqref{eveny}, 
according to the strategy of the proof of Proposition \ref{large N} gives convergence
to these PDFs.

\begin{proposition}\label{large N half}
Let the points $z_i^{(j+1)} := y_i^{(j)}/ \sqrt{M}$ be a rescaling of the points $y_i^{(j)}$, where $M$ is the order of the half Aztec diamond as described above, and let $z^{(1)}_1=0$.  Given that the $y_i^{(j)}$ have PDF $p$ as described in \eqref{oddy} and \eqref{eveny}, 
then \begin{eqnarray}
p(y^{(1)}, \dots, y^{(n)}) \rightarrow p_{aGUE}(z^{(1)}, \dots, z^{(n)}) & as & N \rightarrow \infty \end{eqnarray} where $p_{aGUE}$ is the PDF  for the anti-symmetric GUE minor process specified by (\ref{agminor1})
and (\ref{agminor2}).
\end{proposition}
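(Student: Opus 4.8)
The plan is to follow the strategy of the proof of Proposition \ref{large N}, treating the even and odd $z$-lines separately via \eqref{oddy} and \eqref{eveny}, and to expose the anti-symmetric structure through a suitable change of variables. Under the rescaling $z_i^{(j+1)} = y_i^{(j)}/\sqrt{M}$ the lattice spacing $1$ in the $y$ variable becomes $1/\sqrt{M}$ in the $z$ variable, so I would first convert each discrete marginal into a density by multiplying by a factor $\sqrt{M}$ per particle, the analogue of the Jacobian $\prod g'(z_i^{(j)})$ used in Proposition \ref{large N}. The index shift $j \mapsto j+1$ together with the convention $z_1^{(1)} = 0$ accounts for the single zero eigenvalue of the $1 \times 1$ anti-symmetric minor, and one checks that the particle counts then match: the $y$-line $2m-1$ (with $m$ particles) maps to the even $z$-line $2m$, whose minor has $m$ positive eigenvalues, while the $y$-line $2m$ maps to the odd $z$-line $2m+1$, again with $m$ positive eigenvalues. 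Thus \eqref{oddy} should limit to \eqref{agminor1} and \eqref{eveny} to \eqref{agminor2}.

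The key structural step is the identity that produces the anti-symmetric Vandermonde. Writing $w_i = y_i - \tfrac12$, one has $y_i - y_j = w_i - w_j$ and $y_i + y_j - 1 = w_i + w_j$, so that
\[
\Delta(y^{(n)}) S(y^{(n)}) = \prod_{1 \le i < j \le n} (w_i - w_j)(w_i + w_j) = \prod_{1 \le i < j \le n}(w_i^2 - w_j^2).
\]
Since $w_i = z_i\sqrt{M} - \tfrac12 = z_i \sqrt{M}\,(1 + O(M^{-1/2}))$, to leading order $w_i^2 - w_j^2 \sim M\,(z_i^2 - z_j^2)$, so the combination $\Delta(y^{(n)}) S(y^{(n)})$ limits to the factor $\prod_{i<j}((z_i)^2 - (z_j)^2)$ appearing in both \eqref{agminor1} and \eqref{agminor2}, up to an explicit power of $M$. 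In the same way the extra factor $\prod_{i=1}^m(y_i^{(2m)} - \tfrac12) = \prod_i w_i$ present in \eqref{eveny} (but absent from \eqref{oddy}) becomes $\prod_l z_l$ to leading order, which is exactly the factor $\prod_l x_l^{(n)}$ distinguishing the odd-$n$ PDF \eqref{agminor2} from the even-$n$ PDF \eqref{agminor1}.

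The Gaussian weight comes from the one-body factor by Stirling's approximation. Applying the forms already used in Proposition \ref{large N} (with $N$ replaced by $M$) to $(y_i + M)!$ and $(M+1 - y_i)!$ with $y_i = z_i\sqrt{M}$ gives
\[
\frac{1}{(y_i + M)!\,(M+1-y_i)!} \sim \frac{e^{2M}}{2\pi M \cdot M^{2M+1}}\, e^{-(z_i)^2},
\]
so that the product over particles supplies the weight $\prod_l e^{-(z_l^{(n)})^2}$ of \eqref{agminor1} and \eqref{agminor2}. The remaining work is to feed these asymptotics, the change of variables and the $\sqrt{M}$ Jacobians into \eqref{oddy} and \eqref{eveny}, expand the explicit factorial prefactors (again by Stirling) and verify that all powers of $M$ cancel while the numerical constants assemble into $1/C_n$ in each parity. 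Finally, as in Proposition \ref{large N}, the discrete interlacing indicator together with the factor $2^{-\alpha}$ passes to the continuum interlacing constraint $\prod_{j=1}^{n-1}\chi(z^{(j+1)} > z^{(j)})$, the boundary coincidences counted by $\alpha$ forming a set of measure zero in the limit; note that the body of the PDF sits entirely on the top line in \eqref{oddy}--\eqref{eveny}, matching the structure of \eqref{agminor1}--\eqref{agminor2}.

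I expect the main obstacle to be the constant and power-of-$M$ bookkeeping in that last step: one must track the competing powers of $M$ coming from the $\binom{m}{2}$ factors of $\Delta(y^{(n)})S(y^{(n)})$, the Jacobians, the one-body Stirling factors and the explicit factorial prefactors, and confirm they cancel identically in each parity, with the leftover constants reproducing $C_n = \pi^{n/4}2^{-n^2/4}$ for $n$ even and $C_n = \pi^{(n-1)/4}2^{-n(n-1)/4}$ for $n$ odd. The conceptual content, by contrast, is entirely contained in the identity $\Delta(y^{(n)})S(y^{(n)}) = \prod_{i<j}(w_i^2 - w_j^2)$, which is what turns the ordinary discrete ensemble into an anti-symmetric (squared-spectrum) one.
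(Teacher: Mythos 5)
Your proposal is correct and follows exactly the route the paper prescribes: the paper itself gives no detailed argument, stating only that ``a straight forward limiting procedure applied to \eqref{oddy} and \eqref{eveny}, according to the strategy of the proof of Proposition \ref{large N}'' yields the result, and your write-up is a faithful fleshing-out of that strategy --- the Jacobian factors $\sqrt{M}$ per particle, Stirling asymptotics producing $e^{-(z_l)^2}$, the identity $\Delta(y)S(y)=\prod_{i<j}\bigl((y_i-\tfrac12)^2-(y_j-\tfrac12)^2\bigr)$ generating the anti-symmetric Vandermonde, the factor $\prod_i(y_i^{(2m)}-\tfrac12)$ supplying $\prod_l x_l^{(n)}$ in the odd case, and the index shift $z^{(j+1)}=y^{(j)}/\sqrt{M}$ with $z_1^{(1)}=0$ matching the particle counts to the minor sizes. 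Your parity bookkeeping and treatment of the interlacing indicator and $2^{-\alpha}$ factor are likewise consistent with the proof of Proposition \ref{large N}, so there is nothing genuinely different to compare.
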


Our last task is to compute the limiting support.
We proceed using the same method as for the full Aztec diamond case.  Here however, we will be dealing with Boltzmann factors of the form
\begin{eqnarray}\label{loggas2}
\prod_{1\leq i < j \leq N_p} |x_i^2-x_j^2|^\beta \prod_{k=1}^{N_p} e^{-\beta V(x_k)}
\end{eqnarray}
so \eqref{V} becomes,
\begin{equation}\label{V2}
V(x) := \int_0^a \rho(t) \log|x^2-t^2| dy.
\end{equation}
If we define $\rho(-x):= \rho(x)$, then this can be expressed
\begin{equation}
V(x) = \int_{-a}^a \rho(y) \log|x-y| dy
\end{equation}
similar to \eqref{V}, although now \eqref{intdens} becomes 
\begin{equation}\label{distint2}
\int_{-a}^a \rho(y) dy = 2N_p.
\end{equation}
In the limit $M\rightarrow \infty$, \eqref{odd1line} and \eqref{even1line} approach a continuum log-gas \eqref{loggas2} upon the substitution
\begin{equation}
x_i^{(n)} - \frac{1}{2} = Mt_i^{(n)}
\end{equation}
where, to leading order, $0 \leq t \leq 1$.

In terms of the co-ordinate $t_i = t_i^{(2n-1)}$, the one body factor in \eqref{loggas2} reads
\begin{equation}\label{oddV}
e^{-2V(z)} = \frac{1}{\left(M(1+t)+\frac{1}{2}\right)!\left(M(1-t)+\frac{1}{2}\right)!}
\end{equation}
Let $s = n/M$. Then $N_p = Ms$ and from \eqref{oddV}
\begin{equation}\label{Vdash}
\lim_{M\rightarrow \infty} \frac{2V'(t)}{M} = \log\left(\frac{1+t}{1-t}\right).
\end{equation}
According to \eqref{support1} but taking into account \eqref{distint2}, $a(s)$ is given by solving
\begin{equation}
\int_{-a}^a \frac{t \log\left(\frac{1+t}{1-t}\right)}{\sqrt{a^2-t^2}} \, dt= 4\pi s,
\end{equation}
(cf.~\eqref{intcs}) and the integral can be evaluated to give
\begin{equation}
1- \sqrt{1-a^2} = 2s
\end{equation}
This can be solved immediately for $s \in [0,\frac{1}{2}]$.
However, for $s \in (\frac{1}{2}, 1]$, like with the earlier case, this equation has no solution for
 $a \in [0,1]$.  Using the same logic as leading to (\ref{fr}), we define $a(s) := 1$ for $s \in (\frac{1}{2},1]$, so
\begin{equation}\label{80}
a(s) = \left\{\begin{array}{ll} \sqrt{1-(1-2s)^2} & s \in [0,\frac{1}{2}] \\ 1 & s \in (\frac{1}{2},1]\end{array} \right.,
\end{equation}
giving the same shape as the top half of the full Aztec diamond, as expected.

In terms of the co-ordinate $t_i = t_i^{(2n)}$, the one body factor in \eqref{loggas2} reads
\begin{equation}
e^{-2V(t)} = \frac{Mt}{\left(M(1+t)+\frac{1}{2}\right)!\left(M(1-t)+\frac{1}{2}\right)!}
\end{equation}
This gives the same equation for $V'(z)$ as in \eqref{Vdash}, and thus the same result (\ref{80})
for $a(s)$, again as expected.

\section*{Acknowledgements}
The work of the authors was supported by
a Melbourne Postgraduate Research Award and the Australian Research Council 
respectively. 

\appendix

\section{Appendix}
\begin{proposition}\label{p5.1}
Let $w(x)$ be a weight function with support on successive integers $a,a+1,\dots,b$ ($a>b$),
and suppose $w(x)$ is even about the midpoint $m:=(a+b)/2$ so that 
\begin{equation}\label{s1}
w(x-m) = w(-(x-m)), \qquad \pm (x-m) \in \{a,a+1,\dots,b\}
\end{equation}
Let $\{p_n(x)\}_{n=0,1,\dots}$ be the family of monic orthogonal
polynomials $p_n(x)$ of degree $n$, with the orthogonality relationship
\begin{equation}\label{orthogrel}
\sum_{x=a}^b w(x) p_n(x) p_m(x) \ = (p_n,p_n) \delta_{m,n}
\end{equation}
By the property (\ref{s1}), $p_j(x-m) = (-1)^j p_j(-(x-m))$, so that $p_j(x)$ is even about $m$ 
for $j$ even, and odd about $m$ for $j$ odd.
Let $w(x)$ be as in (\ref{s1}) and $\{p_n(x)\}_{n=0,1,\dots}$ be as in (\ref{orthogrel}). One has
\begin{eqnarray}
\sum_{x_1=a}^b \dots \sum_{x_n=a}^b  w(x_1) \dots w(x_n) (\Delta(x-m)^2)^2 &\!\! \! = \!\!\! & n! \prod_{j=0}^{n-1} (p_{2j}, p_{2j}) \label{orthogodd}\\
\label{orthogeven} \sum_{x_1=a}^b \dots \sum_{x_n=a}^b  w(x_1) \dots w(x_n)( \Delta(x-m)^2)^2 \prod_{i=1}^n (x_i -m)^2 & \!\!\! = \!\!\! & n! \prod_{j=0}^{n-1} (p_{2j+1}, p_{2j+1})\end{eqnarray}
\end{proposition}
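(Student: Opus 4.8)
The plan is to reduce each multiple sum to the sum of a single squared determinant of orthogonal polynomials, and then to invoke the discrete Andr\'eief (Heine) identity together with the orthogonality relation \eqref{orthogrel}. Throughout I set $u_i := x_i - m$ and read $\Delta(x-m)^2$ as the Vandermonde product $\prod_{1\le i<j\le n}(u_i^2 - u_j^2)$ in the squared variables. The parity identity $p_j(x-m) = (-1)^j p_j(-(x-m))$ recorded in the statement says exactly that $p_{2j}(x)$ is a monic polynomial of degree $j$ in $u^2$, while $p_{2j+1}(x) = u \cdot (\text{monic polynomial of degree } j \text{ in } u^2)$; this structural fact drives both evaluations.

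For the even sum \eqref{orthogodd} I would first write $\Delta(x-m)^2 = \det\big[(u_i^2)^{j-1}\big]_{i,j=1}^n$ and then perform column operations: since $p_{2(j-1)}(x)$ is monic of degree $j-1$ in $u^2$, replacing the monomial column $(u_i^2)^{j-1}$ by $p_{2(j-1)}(x_i)$ is a unitriangular column change and leaves the determinant invariant, so
\[
\Delta(x-m)^2 = \det\big[p_{2(j-1)}(x_i)\big]_{i,j=1}^n .
\]
The integrand thus becomes $\prod_k w(x_k)\,\det[p_{2(j-1)}(x_i)]^2$, and the discrete Andr\'eief identity
\[
\sum_{x_1,\dots,x_n}\prod_{k=1}^n w(x_k)\,\det[\phi_i(x_k)]\,\det[\psi_i(x_k)] = n!\,\det\Big[\sum_x w(x)\phi_i(x)\psi_j(x)\Big]_{i,j=1}^n
\]
applied with $\phi_i=\psi_i=p_{2(i-1)}$ reduces the sum to $n!$ times a Gram determinant. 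By \eqref{orthogrel} this Gram matrix is diagonal with entries $(p_{2(i-1)},p_{2(i-1)})$, so its determinant is $\prod_{j=0}^{n-1}(p_{2j},p_{2j})$, which is \eqref{orthogodd}.

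The odd sum \eqref{orthogeven} follows the same route once the extra factor $\prod_i(x_i-m)^2 = \prod_i u_i^2$ is absorbed. Pulling one power $u_i$ into each row of the Vandermonde gives $\prod_i u_i\,\det[(u_i^2)^{j-1}] = \det[u_i^{2j-1}]$, and column operations using $p_{2j-1}(x) = u\cdot(\text{monic of degree } j-1 \text{ in } u^2)$ yield $\prod_i (x_i-m)\,\Delta(x-m)^2 = \det[p_{2j-1}(x_i)]_{i,j=1}^n$. Squaring and applying Andr\'eief with $\phi_i=\psi_i=p_{2i-1}$, the Gram matrix is again diagonal by orthogonality, giving $n!\prod_{j=1}^n(p_{2j-1},p_{2j-1}) = n!\prod_{j=0}^{n-1}(p_{2j+1},p_{2j+1})$, i.e.\ \eqref{orthogeven}.

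The two Andr\'eief reductions and the final orthogonality evaluation are routine. The step that carries the real weight is the determinantal rewriting, which succeeds only because of the even/odd symmetry of the $p_j$ about $m$: the even-degree polynomials are genuinely polynomials in $u^2$ and the odd-degree ones are $u$ times such polynomials. I expect the main thing to check carefully is therefore this parity structure and the consequent unitriangularity of the change of basis from $\{(u^2)^{j-1}\}$ (respectively $\{u^{2j-1}\}$) to the orthogonal-polynomial basis, so that the column operations preserve the determinant.
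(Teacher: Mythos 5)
Your proposal is correct and follows essentially the same route as the paper: the parity-based unitriangular column change rewriting $\Delta((x-m)^2)$ (respectively $\prod_i(x_i-m)\,\Delta((x-m)^2)$) as $\det[p_{2(j-1)}(x_i)]$ (respectively $\det[p_{2j-1}(x_i)]$), followed by reduction to a diagonal Gram determinant via orthogonality. The only cosmetic difference is that you cite the discrete Andr\'eief identity as a known lemma, whereas the paper derives it inline by replacing one determinant with $n!$ times its diagonal term and summing column-by-column.
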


\begin{proof}
Consider first (\ref{orthogodd}). According to the Vandermonde determinant identity
\begin{equation}\label{dd}
\Delta((x-m)^2) = \det [ (x_i - m)^{2(j-1)} ]_{i,j=1,\dots,n} = \det [p_{2(j-1)}(x_i) ]_{i,j=1,\dots,n}
\end{equation}
where to obtain the second equality the fact that $p_{2j}(x)$ is even about $x=m$ has been used
(since $\Delta((x-m)^2)$ is unchanged (up to sign) by $x_i \leftrightarrow x_j$ and by
$(x_i-m) \mapsto -(x_i-m)$ this is referred to as a type $B$ identity).
Thus the LHS of (\ref{orthogodd}) can be rewritten
\begin{equation}\label{s2}
 \sum_{x_1=a}^b \dots \sum_{x_n=a}^b  w(x_1) \dots w(x_n) \det[p_{2(j-1)}(x_i)]\det[p_{2(j-1)}(x_i)] 
\end{equation}
Since both determinants are antisymmetric in $\{x_j\}$, while the remaining factors in the summand
are symmetric, we can replace one of the determinants by $n!$ times its diagonal term
$\prod_{i=1}^n p_{2(i-1)}(x_i)$. We can then perform each sum column-by-column in the remaining
determinant to reduce (\ref{s2}) to
$$
n! \det \Big [ \sum_{x=a}^b w(x) p_{2(i-1)}(x) p_{2(j-1)}(x) \Big ]_{i,j=1,\dots,n}
$$
Now making use of (\ref{orthogrel}), (\ref{orthogodd}) results.

Regarding (\ref{orthogodd}), instead of (\ref{dd}) we use
$$
\Delta((x-m)^2) \prod_{i=1}^n (x_i -m) =  \det [p_{2j-1}(x_i) ]_{i,j=1,\dots,n}
$$
which is a consequence of the Vandermonde determinant identity and the fact that
$p_{2j-1}(x)$ is odd about $x=m$, and proceed similarly. 
\end{proof}

The special case $p={1 \over 2}$ of the monic Krawtchouk polynomials obey the discrete orthogonality equation
\begin{equation}
\sum_{x=0}^{N} \frac{1}{2^N x! (N-x)!} p_{a,N}(x) p_{b,N}(x) = \frac{a!}{2^{2a} (N-a)!}\delta_{a,b}
\end{equation}
(see e.g.~\cite{KK96}).
Setting $N = 2M+1$, $x= t+M$ this reads
\begin{equation}
\sum_{t=-M}^{M+1} \frac{1}{2^{2M+1} (t+M)! (M+1-t)!} p_{a,2M+1}(t+M) p_{b,2M+1}(t+M) = \frac{a!}{2^{2a} (2M+1-a)!} \delta_{a,b}
\end{equation}
If we define a new family of monic polynomials, $q_{n,N}$, by $q_{n,N}(x) = p_{n,N}(N+\frac{1}{2} +x)$ then we have
\begin{equation}\label{89}
\sum_{t=-M}^{M+1} \frac{1}{2^{2M+1}(t+M)!(M+1-t)!}q_{a,2M+1}\left(t-\frac{1}{2}\right) q_{b,2M+1}\left(t-\frac{1}{2}\right) =  \frac{a!}{2^{2a} (2M+1-a)!} \delta_{a,b}
\end{equation}

The midpoint of the support of the weight in (\ref{89}) is $t=1/2$, and furthermore the weight is
symmetrical about this point. Hence we can apply Proposition \ref{p5.1} to deduce that
\begin{eqnarray*}
&& \sum_{x_1=1}^{M+1} \dots \sum_{x_n=1}^{M+1} \prod_{i=1}^n \frac{1}{(M+x_i)!(M+1-x_i)!} \Delta\left(( x-\frac{1}{2})^2\right)^2 = n! C_{2n-1}
\\
&&\sum_{x_1=1}^{M+1} \dots \sum_{x_n=1}^{M+1} \prod_{i=1}^n \frac{(x_i-\frac{1}{2})^2}{(M+x_i)!(M+1-x_i)!} \Delta\left(( x-\frac{1}{2})^2\right)^2 = n! {C_{2n}}
\end{eqnarray*}
where
\begin{eqnarray}
C_{2n-1} & = & 2^{2Mn} \prod_{j=0}^{n-1} \frac{(2j)!}{2^{4j}(2(M-j)+1)!}  \label{92} \\
C_{2n} & = & 2^{2Mn} \prod_{j=0}^{n-1} \frac{(2j+1)!}{2^{4j+2}(2M-2j)!}  \label{94}
\end{eqnarray}
Here the range of summation has been halved by noting $\sum_{x=-M}^{M+1} =
2 \sum_{x=0}^{M+1}$ when the summand is even about $x=1/2$. Minor manipulation of
(\ref{92}) and (\ref{94}) reclaims the normalizations in \eqref{67} and  \eqref{even1line}.

%\bibliographystyle{amsplain}
%\bibliography{book1}

\providecommand{\bysame}{\leavevmode\hbox to3em{\hrulefill}\thinspace}
\providecommand{\MR}{\relax\ifhmode\unskip\space\fi MR }
% \MRhref is called by the amsart/book/proc definition of \MR.
\providecommand{\MRhref}[2]{%
  \href{http://www.ams.org/mathscinet-getitem?mr=#1}{#2}
}
\providecommand{\href}[2]{#2}

\end{document}